\newcommand{\parag}[1]{\noindent {\bf #1}}
\newcommand{\nfrac}{\nicefrac}
\newcommand{\eps}{\epsilon}
\renewcommand{\epsilon}{\varepsilon}
\renewcommand{\tilde}{\widetilde}
\renewcommand{\hat}{\widehat}
\newtheorem{theorem}{Theorem}[section]
\newtheorem{fact}[theorem]{Fact}
\newenvironment{proof}{\begin{trivlist} \item {\bf Proof:~~}}
   {\qed\end{trivlist}}
\newenvironment{proofof}[1]{\begin{trivlist} \item {\bf Proof
#1:~~}}
  {\qed\end{trivlist}}
\def\FullBox{\hbox{\vrule width 6pt height 6pt depth 0pt}}
\def\qed{\ifmmode\qquad\FullBox\else{\unskip\nobreak\hfil
\penalty50\hskip1em\null\nobreak\hfil\FullBox
\parfillskip=0pt\finalhyphendemerits=0\endgraf}\fi}
\def\qedsketch{\ifmmode\Box\else{\unskip\nobreak\hfil
\penalty50\hskip1em\null\nobreak\hfil$\Box$
\parfillskip=0pt\finalhyphendemerits=0\endgraf}\fi}
\newcommand\Z{\mathbb Z}
\newcommand\N{\mathbb N}
\newcommand\R{\mathbb R}
\newcommand{\inparen}[1]{\left(#1\right)}             
\newcommand{\inbraces}[1]{\left\{#1\right\}}           
\newcommand{\inangle}[1]{\left\langle#1\right\rangle} 
\newcommand{\poly}{\mathrm{poly}}
\newcommand{\cB}{\mathcal{B}}
\newcommand{\OPT}{\mathrm{OPT}}
\newcommand{\NP}{\mathbf{NP}}
\newcommand{\rOmega}{\widetilde{\Omega}}
\title{\bf Ranking with Fairness Constraints}
\date{}
\author{L. Elisa Celis}
\author{Damian Straszak}
\author{Nisheeth K. Vishnoi}
\affil{\small \'{E}cole Polytechnique F\'{e}d\'{e}rale de Lausanne (EPFL), Switzerland}
\begin{document}
\maketitle

\begin{abstract}
Ranking algorithms are deployed widely to order a set of items in applications such as search engines, news feeds, and recommendation systems.
Recent studies, however, have shown that, left unchecked, the output of ranking algorithms can result in decreased diversity in the type of content presented,   promote stereotypes, and polarize opinions.
In order to address such issues, we study the following variant of the traditional ranking problem when, in addition, there are  {\em fairness} or {\em diversity} constraints.
Given a collection of items along with 1) the value of placing an item in a particular position in the ranking, 2) the collection of {\em sensitive} attributes (such as gender, race, political opinion) of each item and 3) a collection of fairness constraints that, for each $k$, bound the number of items with each attribute that are allowed to appear in the top $k$ positions of the ranking, the goal is to output a ranking that maximizes the value with respect to the original rank quality metric while respecting the constraints.
This problem encapsulates various well-studied problems related to bipartite and hypergraph matching as special cases and turns out to be hard to approximate even with simple constraints.
Our main technical contributions are fast exact and approximation algorithms along with complementary hardness results that, together, come close to settling the approximability of this constrained ranking maximization problem.
Unlike prior work on the approximability of constrained matching problems, our algorithm runs in linear time, even when the number of constraints is (polynomially) large, its approximation ratio does not depend on the number of constraints, and it produces solutions with small constraint violations.
Our results rely on insights about the constrained matching problem when the objective function satisfies certain properties that appear in common ranking metrics such as discounted cumulative gain (DCG), Spearman's rho or Bradley-Terry, along with the nested structure of fairness constraints.
\end{abstract}

\newpage

\tableofcontents

\newpage

\section{Introduction}
Selecting and ranking a subset of data is a fundamental problem in information retrieval and at the core of ubiquitous applications including ordering search results such (e.g., Google), personalized social media feeds (e.g., Facebook, Twitter or Instagram), ecommerce websites (e.g., Amazon or eBay), and online media sites (e.g., Netflix or YouTube). 
The basic algorithmic problem that arises is as follows: There are $m$ \emph{items} (e.g., webpages, images, or  documents), and the goal is to output a list of $n \ll m$ items in the order that is most {\em valuable} to a given user or company. 
For each item $i \in [m]$ and a position $j \in [n]$ one is given a number $W_{ij}$ that captures the \emph{value} that item $i$ contributes to the ranking if placed at position $j$. 
These values can be tailored to a particular query or user and a significant effort has gone into developing models and mechanisms to learn these parameters \cite{IRbook}. 
In practice there are many ways one could arrive at $W_{ij}$, each of which results in a slightly different metric for the value of a ranking -- prevalent examples include versions of discounted cumulative gain (DCG) \cite{jarvelin2002cumulated}, Bradley-Terry \cite{bradley1952rank} and Spearman's rho \cite{spearman1904proof}. 
Note that for many of these metrics, one does not necessarily need $nm$ parameters to specify $W$ and typically $m$ ``degrees of freedom'' is enough (just specifying the ``quality of each item'').
Still, we choose to work with this general setting, and only abstract out the most important properties such a weight matrix $W$ satisfies.
Generally, for such metrics, $W_{ij}$ is non-increasing in both $i$ and $j$, and 
if we interpret $i_1 < i_2$ to mean that $i_1$ has better {\em quality} than $i_2$, then the value of the ranking can only increase by placing $i_1$ above $i_2$ in the ranking. 
Formally, such values satisfy the following  property (known as monotonicity and the Monge condition)
\begin{equation} 
\label{eq:monge} 
{W_{i_1j_1} \geq W_{i_2j_1} \; \; \; \mathrm{and} \; \; \;  W_{i_1j_1} \geq W_{i_1j_2}  \; \; \; \mathrm{and} \; \; \;  W_{i_1 j_1}+W_{i_2 j_2} \geq W_{i_1j_2}+W_{j_1i_2}}
\end{equation}
for all $1\leq i_1 < i_2 \leq m$ and $1\leq j_1 < j_2 \leq n$ (see Appendix~\ref{sec:metrics}). 
The {\em ranking maximization} problem is to find an assignment of the items to each of the $n$ positions such that  the total value obtained is maximized.
In this form, the problem is equivalent to finding the maximum weight matching in a complete $m \times n$ bipartite graph and has a well known solution -- the Hungarian algorithm.
However, recent studies have shown that producing rankings in this manner 
can result in one type of content being overrepresented at the expense of another. 
This is a form of {\em algorithmic bias}  and can lead to grave societal consequences --  
from search results that inadvertently promote stereotypes by over/under-representing sensitive attributes such as race and gender \cite{kay2015unequal,bolukbasi2016man}, 
to news feeds that can promote extremist ideology \cite{costello2016views} and possibly even influence the results of elections \cite{Baer2016,bakshy2015exposure}.
For example, \cite{Epstein2015} demonstrated that by varying the ranking of a set of news articles the voting preferences of undecided voters can be manipulated.  
Towards ensuring that no type of content is overrepresented  
in the context of the ranking problem as defined above, we introduce the {\em constrained ranking maximization} problem that restricts allowable rankings to those in which no type of content dominates -- i.e., {\em to ensure the rankings are fair}.

Since fairness (and bias) could mean different things in different contexts, rather than fixing one specific notion of fairness, 
we allow the \emph{user} to specify a set of \emph{fairness constraints}; in other words, we take the constraints as input.
As a motivating example, consider the setting in which the set of items consists of $m$ images of computer scientists, each image is associated with several (possibly non-disjoint) sensitive attributes or \emph{properties} such as gender, ethnicity and age, and a subset of size $n$ needs to be selected and ranked. 
The user can specify an upper-bound $U_{k\ell} \in \mathbb Z_{\geq 0}$  on the number of items with property $\ell$ that are allowed to appear in the top $k$ positions of the ranking, and similarly a lower-bound $L_{k\ell}$. 
Formally, let $\{1, 2, \ldots, p\}$ be a set of properties  
and let $P_\ell \subseteq [m]$ be the set of items that have the property $\ell$ (note that these sets need not be disjoint). 
Let $x$ be an $m \times n$ binary assignment matrix  whose $j$-th column contains a one in the $i$-th position if item $i$ is assigned to position $j$ (each position must be assigned to exactly one item and each item can be assigned to at most one position).
We say that $x$ satisfies the fairness constraints if for all $\ell \in [p]$ and $k \in [n]$,  we have
\begin{equation*}
{ L_{k\ell}\leq \sum\limits_{1 \leq j \leq k}  \sum\limits_{i \in P_{\ell}} x_{ij} \leq U_{k\ell},}
\end{equation*}
If we let $\cB$ be the family of all assignment matrices $x$ that satisfy the fairness constraints, the constrained ranking optimization problem is: Given the sets of items with each property $\{P_1,\ldots,P_p\}$, the fairness constraints, $\{L_{k\ell}\}$, $\{U_{k\ell}\}$, and the values $\{W_{ij}\}$,  find 
\begin{equation*}
{\mathop{\arg\max}\limits_{x \in \cB} \sum\limits_{i \in [m], j \in [n]} W_{ij} x_{ij}.} 
\end{equation*}
This problem is equivalent to finding a maximum weight matching of size $n$ \emph{that satisfies the given fairness constraints} in a weighted complete $m\times n$ bipartite graph, and now becomes non-trivial -- its complexity is the central object of study in this paper.

Beyond the fairness and ethical considerations, traditional {\em diversification} concerns in information retrieval such as {query ambiguity} (does ``jaguar''  refer to the car or the animal?) or {user context} (does the user want to see webpages, news articles, academic papers or images?) can also be cast in this framework.
Towards this, a rich literature on diversifying rankings has emerged in information retrieval.  
On a high-level, several approaches  redefine the objective function to incorporate a notion of diversity and leave the ranking maximization problem unconstrained. E.g., a common approach is to re-weight the $w_{ij}$s to attempt to capture the amount of diversity item $i$ would introduce at position $k$ conditioned on the items that were placed at positions $1, \ldots, k-1$ (see \cite{carbonell1998use,ZH2008,clarke2008, ZCL2003, ZMKL2005}), or 
casting it directly as an (unconstrained) multi-objective optimization problem \cite{yang2016measuring}. 
Alternate approaches mix together or aggregate different rankings, e.g., as generated by different interpretations of a query \cite{radlinski2009redundancy,dwork2001rank}.
Diversity has also been found to be desirable by users \cite{CelisTekriwal2017}, and has been observed to arise inherently when the ranking is determined by user upvotes \cite{CelisKK16}
Despite these efforts and the fact that all major search engines  now diversify their results, highly uniform content is often still displayed -- e.g., certain image searches can display results that have almost entirely the same attributes  \cite{kay2015unequal}.
Further,  \cite{gollapudi2009axiomatic}  showed that no  {single} diversification function can satisfy a set of natural axioms that one would want any fair ranking to have. 
In essence, there is a tension between relevance and fairness -- if the $w_{ij}$s for items that have a given property are much higher than the rest, the above approaches cannot correct for overrepresentation. 
Hence the reason to cast the problem as a constrained optimization problem: The objective is still determined by the values but the solution space is restricted by fairness constraints.

Theoretically, the fairness constraints come with a computational price:  
The constrained ranking maximization problem can be seen to generalize various $\NP$-hard problems such as independent set, hypergraph matching and set packing.
Unlike the unconstrained case, even checking  if there is a complete feasible ranking (i.e., $\cB \neq \emptyset$) is $\NP$-hard. 
As a consequence, in general, we cannot hope to produce a solution that does not violate any constraints.  
Some variants and generalizations of our problem have been studied in the TCS and optimization literature; here we mention the three most relevant. Note that some may leave empty positions in the ranking as opposed to selecting $n$ elements to rank as we desire.
\cite{AFK96} considered the  bipartite perfect matching problem with $\poly(m)$ constraints.
They present a polynomial time randomized algorithm that finds a near-perfect matching which violates each constraint additively by at most $O(\sqrt m)$.
\cite{GRSZ13} improved the above result to a $(1+\eps)$-approximation algorithm; however, the running time of their algorithm is roughly $ m^{\nfrac{K^{2.5}}{{\eps^2}}}$ where $K$ is the number of hard constraints and the output is a  matching. 
\cite{Srinivasan95} studied the approximability of  the packing integer program problem which, when applied to our setting and gives an $O(\sqrt{m})$ approximation algorithm.  
In the constrained ranking maximization problem presented above, all of these results seem inadequate; the number of fairness constraints is $2np$ which would make the running time of \cite{GRSZ13}  too large and an additive violation of $O(\sqrt{m})$ would render the upper-bound constraints impotent. 
The main technical contributions of this paper are fast, exact and approximation algorithms for this constrained ranking maximization problem along with complementary hardness results which, together, give a solid understanding of the computational complexity of this problem.
To overcome the limitations of the past work on constrained matching problems, our results often make use of two structural properties of such a formulation:
A) The set of constraints can be broken into $p$ groups; for each property $\ell \in [p]$ we have  $n$ (nested) upper-bound constraints, one for each $k \in [n]$, and 
B) The objective function satisfies the  property stated in \eqref{eq:monge}.
Using properties A) and B) we obtain efficient -- polynomial, or even linear time algorithms for this problem in various interesting regimes.
Both these properties are natural in the information retrieval setting and could be useful in other algorithmic contexts involving rankings.
%


\section{Our Model}\label{sec:model}
We study the following  {\em constrained ranking maximization problem}
\begin{equation}
{\mathop{\arg\max}\limits_{x\in R_{m,n}} \sum\limits_{i \in [m], j \in [n]} W_{ij} x_{ij}} ~~~~~~
\mathrm{s. t.} ~~~~
 L_{k\ell} \leq \sum\limits_{1 \leq j \leq k}  \sum\limits_{i \in P_{\ell}} x_{ij} \leq U_{k\ell}~~~~~\forall~ \ell \in [p],k\in [n],
\end{equation}
where $R_{m,n}$ is the set of all matrices $\{0,1\}^{m \times n}$ which represent ranking $m$ items into $n$ positions. Recall that $W_{ij}$ represents the profit of placing item $i$ at position $j$ and for every property $\ell \in [p]$ and every position $k$ in the ranking, $L_{k\ell}$ and $U_{k\ell}$ are the lower and upper bound on the number of items having property $\ell$ that are allowed to be in the top $k$ positions in the ranking.
For an example, we refer to Figure~\ref{fig:example}.

We distinguish two important special cases of the problem: when only the upper-bound constraints are present, and when only the lower-bound constraints are present.
These variants are referred to as the {\em constrained ranking maximization problem (U)} and the {\em constrained ranking maximization problem (L)} respectively, and to avoid confusion we sometimes add (LU) when talking about the general problem with both types of constraints present.
Furthermore, most our results hold under the assumption that the weight function $W$ is monotone and satisfies the Monge property~\eqref{eq:monge}, whenever these assumptions are not necessary, we emphasize this fact by saying that {\it general weights} are allowed.

\begin{figure}[t!]
\centering
\subcaptionbox{\footnotesize  An example of a value matrix $W$. The values corresponding to the optimal (unconstrained) ranking in (b) and the optimal constrained ranking in (c) are depicted by gray and orange  respectively. 
(Note that, for clarity of the rank order, the above is the transpose of the matrix referred to as $W$ in the text.)
\label{sub-W}}{%
  \includegraphics[width=.48\columnwidth]{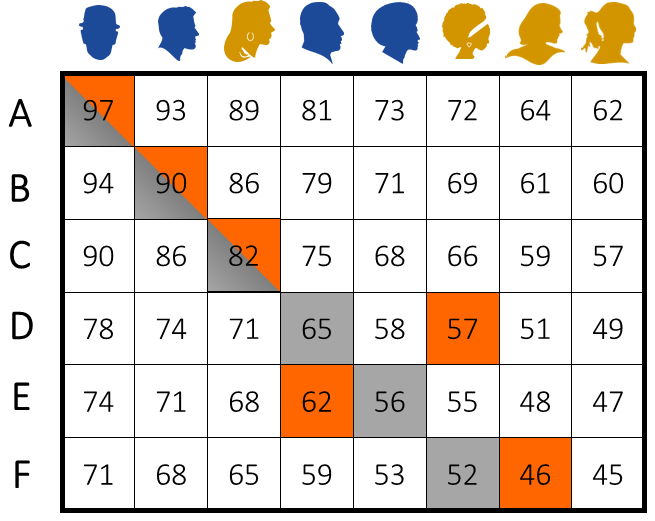}%
}\hfil
\subcaptionbox{\footnotesize The optimal unconstrained ranking. The upper-bound constraint at position D is violated as there are 3 men, but only 2 are allowed.\label{sub-unfair}}{%
  \includegraphics[width=.21\columnwidth]{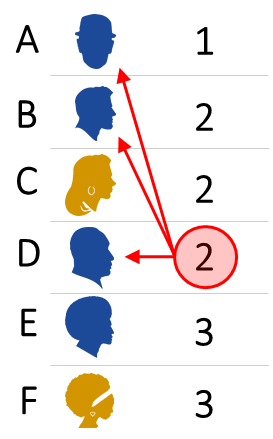}%
}\hfil
\subcaptionbox{\footnotesize The optimal constrained ranking. The upper-bound constraint at position D is no longer violated; in fact all constraints are satisfied.\label{sub-fair}}{%
  \includegraphics[width=.21\columnwidth]{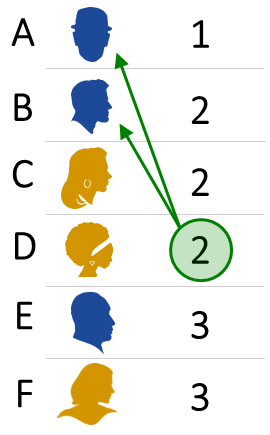}%
}
\caption{A simple example of our framework: In (a) a matrix of $W_{ij}$s is presented. Here, the options are people who are either male (blue) or female (yellow), and 6 of them must be ranked. We assume that there is a single upper-bound constraint for each position in the ranking which is applied to both genders as depicted in figures (b) and (c). The constraints are satisfied in the latter, but not the former. The weights of these two rankings are depicted in figure (a).}\label{fig:example}
\end{figure}
\section{Our  Results}
In this section, we present an overview of our results. 
The statements of theorems here are informal for the ease of readability, formal statements are deferred to a later part.

Let the \emph{type} 
$$T_i:=\{ \ell \in [p] : i \in P_\ell \}$$ of item $i$ be the set of properties that the item $i$ has. 
Our first result is an exact algorithm for solving the constrained ranking  maximization problem whose running time is polynomial if the number of distinct $T_i$s, denoted by $q$, is constant; see Theorem~\ref{thm:dp_formal}.
\begin{theorem}[Exact dynamic programming-based algorithm]\label{thm:dp}
There is an algorithm that solves the constrained ranking maximization problem (LU) in $O(pqn^q+pm)$ time 
 when the values $W$ satisfy property \eqref{eq:monge}. 
\end{theorem}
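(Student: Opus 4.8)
The plan is to exploit the observation that the fairness constraints depend on the assignment only through the \emph{type} of each placed item: since all items of a given type belong to exactly the same set of properties, the only information about positions $1,\dots,k$ that the constraints at $k$ can see is, for each of the $q$ distinct types, how many items of that type occupy the top $k$ positions. First I would group the $m$ items by type in $O(pm)$ time and, within each type $t$ (a subset of $[p]$), list its items in quality order, writing $i^{(t)}_1,i^{(t)}_2,\dots$ for its best, second-best, etc.\ (since the item index already encodes quality, this is a bucketing). The algorithm is then a dynamic program that scans positions $k=1,2,\dots,n$ and whose state is a count vector $(c_1,\dots,c_q)$, where $c_t$ records how many items of type $t$ lie in the first $k$ positions; note $\sum_t c_t=k$.

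The crux is a structural lemma, proved by an exchange argument using \eqref{eq:monge}, that makes the per-position contributions well defined and additive. Fix any \emph{type pattern}, i.e.\ an assignment of a type to each position, and let $j^{t}_1<\cdots<j^{t}_{a_t}$ be the positions assigned to type $t$. I would show the maximum-weight realization of this pattern is to (a) use the top $a_t$ items of each type (immediate from monotonicity of $W$ in $i$, as swapping in a better unused item of the same type never decreases the value and never changes any type count), and (b) match them \emph{sorted-to-sorted}, placing $i^{(t)}_s$ at position $j^{t}_s$ (the standard consequence of the Monge/supermodularity inequality in \eqref{eq:monge}: uncrossing any inverted within-type pair cannot decrease the objective). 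Both operations act only within a type, so they preserve every prefix type count and hence feasibility. The payoff is that under this canonical assignment the item occupying the $s$-th position of type $t$ is always $i^{(t)}_s$, regardless of where the remaining type-$t$ positions fall; therefore placing an item of type $t$ at position $k$, when $c_t$ type-$t$ items already lie to its left, contributes exactly $W_{i^{(t)}_{c_t+1},\,k}$, a quantity independent of future choices.

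Given this decomposition the recurrence is transparent. Let $f_k(c_1,\dots,c_q)$ be the best total value of a feasible ranking of positions $1,\dots,k$ using $c_t$ items of each type. Then
\[
f_{k}(c_1,\dots,c_q)=\max_{t:\,c_t\ge 1}\Big(f_{k-1}(c_1,\dots,c_t-1,\dots,c_q)+W_{i^{(t)}_{c_t},\,k}\Big),
\]
where the maximum ranges only over types $t$ for which $i^{(t)}_{c_t}$ exists and for which the resulting state is \emph{feasible at $k$}, i.e.\ $L_{k\ell}\le\sum_{t:\,\ell\in t}c_t\le U_{k\ell}$ for every property $\ell$; the optimum is $\max f_n(\cdot)$ over feasible terminal states. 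Correctness follows from the structural lemma: every feasible ranking induces a type pattern whose optimal realization is traced by some DP path, and conversely every DP path yields a genuine feasible ranking of the stated value, with all used items distinct since distinct types have disjoint item sets and within a type the indices $i^{(t)}_1,\dots,i^{(t)}_{c_t}$ are distinct.

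For the running time, the states with $\sum_t c_t=k$ number $O(n^{q-1})$, so there are $O(n^q)$ states and $O(qn^q)$ transitions in total; each transition updates the affected property totals and checks them against $L_{k\cdot},U_{k\cdot}$ in $O(p)$ time, giving $O(pqn^q)$ for the DP, on top of the $O(pm)$ preprocessing, which matches the claimed $O(pqn^q+pm)$. I expect the main obstacle to be the structural lemma of the second paragraph: specifically, arguing that the sorted-to-sorted within-type matching is simultaneously optimal \emph{and} compatible with the left-to-right additive decomposition, so that the weight contributed at a position depends only on its type and running count and not on the global pattern. This is precisely where the full strength of \eqref{eq:monge}, beyond plain monotonicity, is required, and where care is needed to confirm that the exchanges used to establish optimality never disturb feasibility of any prefix.
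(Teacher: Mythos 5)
Your proposal is correct and follows essentially the same route as the paper's proof: the same exchange argument via \eqref{eq:monge} showing that any optimal ranking uses the top items of each type in sorted order (so the item filling the $s$-th slot of type $t$ is always $i^{(t)}_s$), the same DP over count vectors $(c_1,\dots,c_q)$ with the same recurrence and per-state feasibility check against $L_k, U_k$, and the same $O(pqn^q + pm)$ accounting. The "main obstacle" you flag at the end is resolved exactly as you anticipate — within-type swaps leave every prefix type count unchanged, so optimality of the sorted-to-sorted assignment and feasibility-preservation come for free together, which is how the paper argues it as well.
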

\noindent
This algorithm combines a geometric interpretation of our problem along with dynamic programming and proceeds by  solving a sequence of $q-$dimensional sub-problems. 
The proof of Theorem~\ref{thm:dp} is provided in Section~\ref{sec:dp}.
When $q$ is allowed to be large, the problem is $\NP$-hard; see Theorem~\ref{thm:hardnessAll}.

Generally, we may not be able to assume that $q$ is a constant and, even then, it would be desirable to have algorithms whose running time is close to $(m+n)p$, the size of the input.
Towards this we consider a natural parameter of the set of properties: The size of the largest $T_i$, namely 
$$ \Delta := \max_{i \in [m]} |T_i|.$$ 
The complexity of the constrained ranking maximization problem turns out to  show interesting behavior with respect to $\Delta$  (note that $\Delta \leq p$ and typically $p \ll q$).
The case when $\Delta =1$ corresponds to the simplest practical setting where there are $p$ disjoint properties, i.e., the properties partition the set of items.
For instance, a set of images of humans could be partitioned based on the ethnicity or age of the individual.
Note that even though $q=p$ for $\Delta = 1$, this $q$ could still be large and the previous theorem may have a prohibitively large running time.  

When $\Delta=1$ we prove that the constrained ranking maximization problem (LU) is polynomial time solvable.
\begin{theorem}[Polynomial time algorithm for $\Delta=1$]\label{thm:flow}
The constrained ranking maximization problem (LU)  for $\Delta=1$ can be solved in $\widetilde{O}(n^2m)$ time when the values $W$ satisfy property \eqref{eq:monge}.
\end{theorem}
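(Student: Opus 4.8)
The plan is to exploit property (A) at $\Delta = 1$---that $P_1,\dots,P_p$ partition $[m]$---together with the Monge property \eqref{eq:monge} to reduce the problem to a single min-cost flow computation. The first step is to use \eqref{eq:monge} to \emph{decouple} the decision of which positions each property occupies from the decision of which item of a property fills a given slot. Concretely, recall items are indexed $1,\dots,m$ by quality (smaller index $=$ better), so the items of $P_\ell$, listed in increasing index order, are $i^{(\ell)}_1 < i^{(\ell)}_2 < \cdots$, with $i^{(\ell)}_r$ the $r$-th best. Suppose property $\ell$ is allotted positions $j_1 < j_2 < \cdots < j_t$. A standard exchange argument---using monotonicity to argue that the $t$ best items of $P_\ell$ should be used, and the third (Monge) inequality in \eqref{eq:monge} to argue that they should be matched to $j_1,\dots,j_t$ in sorted order---shows that the optimal value contributed by property $\ell$ is exactly $\sum_{r=1}^{t} W_{i^{(\ell)}_r, j_r}$. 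Hence the only remaining freedom is how to partition the $n$ positions among the $p$ properties subject to the nested cardinality bounds, and the objective becomes \emph{separable across properties}, with the $r$-th occupied position of property $\ell$ charged the fixed weight $W_{i^{(\ell)}_r, \cdot}$.

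The second step is to cast this partitioning problem as a min-cost flow (minimizing $-W$). I would introduce, for each property $\ell$ and each rank $r \le \min(|P_\ell|, n)$, a node $\rho^\ell_r$ fed by a unit-capacity arc from a source $s$; one node per position $k \in [n]$, each forced to send exactly one unit to the sink $t$; and a unit-capacity arc $\rho^\ell_r \to (\text{position } k)$ of cost $-W_{i^{(\ell)}_r, k}$. To encode the nested constraints $L_{k\ell} \le \sum_{j \le k} \sum_{i \in P_\ell} x_{ij} \le U_{k\ell}$, I would route each property's flow through an auxiliary prefix chain $C^\ell_0 \to C^\ell_1 \to \cdots \to C^\ell_n$ wired so that the flow across the arc at threshold $k$ equals the number of property-$\ell$ items placed among the top $k$ positions; imposing the capacity interval $[L_{k\ell}, U_{k\ell}]$ on that arc enforces both the lower and the upper bound simultaneously. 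Since, for each fixed property, the active constraints are prefixes (intervals) in the position index, the underlying constraint matrix is a network (interval) matrix and hence totally unimodular, so the min-cost flow admits an integral optimum corresponding to a genuine $\{0,1\}$ ranking in $\cB$; infeasibility (the case $\cB = \emptyset$) is detected automatically by the flow.

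For the running time, the arcs from rank-nodes to positions dominate: because the properties partition the items, there are $\sum_\ell \min(|P_\ell|, n)\cdot n \le nm$ of them, yielding a network with $\tO(nm)$ arcs and $O(np)$ nodes. As every position must receive exactly one unit, the optimal flow has value $n$ and can be produced by $n$ successive shortest-path augmentations with reduced-cost potentials (Dijkstra), for a total of $\tO(n^2 m)$, matching the claimed bound.

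The main obstacle I anticipate is reconciling the two distinct groupings the flow must respect at once: the cost is organized by \emph{(property, rank)} pairs, whereas the nested constraints are organized by \emph{(property, position-prefix)}, and the feasibility requirement of one item per position couples all properties at each position. The role of \eqref{eq:monge} is precisely to dissolve this tension---it lets me fix in advance that the $r$-th occupied slot of a property uses its $r$-th best item, so that rank and position order coincide and the per-slot arc cost $W_{i^{(\ell)}_r, k}$ is well defined. The remaining technical care is to verify that the prefix-chain gadget and the per-position unit arcs can indeed be realized inside one polynomial-size network whose integral optimum provably equals the optimal constrained ranking, rather than merely a relaxation of it.
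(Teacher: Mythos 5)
Your first step---the exchange argument showing that, within each property, the $r$-th best item should occupy the $r$-th slot that the property receives, so the cost becomes separable over (property, rank) pairs---is exactly the observation the paper's proof starts from, and your running-time accounting (acyclic network, $n$ successive shortest-path augmentations with potentials, $\tilde{O}(n^2m)$) also matches the paper. The gap is in the flow construction itself, which is the heart of the proof and which you explicitly defer in your closing paragraph. As described, your network cannot be wired consistently. You place the cost $-W_{i^{(\ell)}_r,k}$ on direct arcs $\rho^\ell_r \to (\text{position } k)$ and separately ask for a per-property prefix chain whose arc at threshold $k$ carries exactly the property-$\ell$ flow into the top $k$ positions. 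In a single-commodity network the chain can sit in only two places. If it sits downstream of the (shared) position nodes, flow conservation at a position node forgets which property the incoming unit belongs to, so a property-$\ell$ unit can legally exit through property $\ell'$'s chain---evading $\ell$'s upper bounds and spuriously satisfying $\ell'$'s lower bounds---and integral flows then no longer correspond to feasible rankings. If instead the chain sits between the rank nodes and the position nodes, the unit's rank is known only at the chain entry and its position only at the chain exit, so no single arc knows both $r$ and $k$, and the cost $-W_{i^{(\ell)}_r,k}$ has nowhere to live. This is precisely the tension you name at the end; naming it is not resolving it.

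The paper's resolution is a device your proposal is missing: the chain edges themselves carry the costs, in telescoped form. Between consecutive chain nodes for property $\ell$ at threshold $k$ there are $U_{k\ell}$ \emph{parallel} unit-capacity edges, the $r$-th with cost $W_{i^{(\ell)}_r,k+1}-W_{i^{(\ell)}_r,k}$ (with $W_{i,n+1}:=0$); a unit that enters at the top of the chain, sticks to the $r$-th parallel edge in every bundle, and exits to position $j$ accumulates exactly $-W_{i^{(\ell)}_r,j}$. The parallel-edge counts enforce the upper bounds, and subtracting a large constant $M$ from the $L_{k\ell}$ cheapest edges of each bundle enforces the lower bounds. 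The Monge property \eqref{eq:monge} is used a second time here, beyond your exchange argument: it guarantees the $r$-th parallel edges really are the $r$ cheapest, so any feasible flow can be rerouted into this structured form without increasing its cost, and only structured flows decode into rankings. Your appeal to total unimodularity cannot substitute for this argument: Fact~\ref{non_integral} in the paper exhibits, already for $\Delta=1$ with a single property, a fractional vertex of the natural LP, so the flow formulation is a genuine reformulation whose equivalence with the ranking problem must be proved via \eqref{eq:monge}, not inferred from generic integrality of interval matrices. (A chain-free repair of your bipartite network---deleting arcs $\rho^\ell_r \to k$ whenever $r > \min_{k'\geq k} U_{k'\ell}$---does work for the (U) variant, but the theorem concerns (LU), and lower bounds cannot be encoded by arc deletions, so the chain and hence the telescoping trick cannot be avoided.)
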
 
The above is obtained by reducing this variant of the ranking maximization problem  to the minimum cost flow problem, that can be solved efficiently (the network is acyclic).
We note that even though the running time is polynomial in $m$, it might be still not satisfactory for practical purposes.
With the aim of designing faster -- linear time algorithms, we focus on the case when only upper-bound constraints are present.
For this case, we analyze a natural linear programming (LP) relaxation for the constrained ranking maximization problem (U).
It reveals interesting structure of the problem and motivates a fast greedy algorithm.
Formally,  
the relaxation considers the  set $\Omega_{m,n}$ defined as 
$$\Omega_{m,n}:=\inbraces{ x \in {[0,1]}^{m \times n}:\sum_{j=1}^n x_{ij} \leq 1 ~\mbox{for all }i\in [m], ~~~
\sum_{i=1}^m x_{ij}=1,\mbox{ for all }j\in [n]}$$
and the following linear program
\begin{equation}\label{eq:lp}	
	\mathop{\max}_{x\in \Omega_{m,n}}  ~~  \sum_{i=1}^m \sum_{j=1}^n W_{ij} x_{ij} ~~~~
		 ~~~~ \mathrm{s.t.} ~~~~   \sum_{i\in P_\ell}\sum_{j=1}^k x_{ij}\leq U_{k\ell},  \mbox{ $\forall$ $\ell \in [p]$, $k\in [n]$.}
\end{equation}
Observe that in the absence of fairness constraints,~\eqref{eq:lp} represents the maximum weight bipartite matching problem -- it is well known that the feasible region of its fractional relaxation has integral vertices and hence the optimal values of these two coincide. 
However, in the constrained setting, even for $\Delta=1$, it can be shown that the feasible region is no longer integral -- it can have fractional vertices  (see Fact~\ref{non_integral}). 
For this reason, it is not true that maximizing any linear objective results in an integral solution.  
Surprisingly, we prove that for $\Delta=1$ the cost functions we consider are special and never yield optimal fractional (vertex) solutions. 
\begin{theorem}[Exact LP-based  algorithm for $\Delta=1$\label{thm:single}; see Theorems \ref{thm:lp1_formal}  and \ref{thm:FastGreedy}]
Consider the linear programming relaxation~\eqref{eq:lp} for the constrained ranking  maximization problem (U) when $\Delta=1$ and the objective function satisfies \eqref{eq:monge}. Then there exists an optimal solution with integral entries and hence the relaxation is exact. Further, there exists a greedy algorithm to find an optimal integral solution in  $O(np + m)$ time.
\end{theorem}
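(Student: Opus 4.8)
The plan is to establish both claims at once by analyzing an explicit greedy procedure: I would show that it returns an integral point that is feasible for \eqref{eq:lp} and whose value equals the optimum of the relaxation, which simultaneously proves exactness (an integral optimizer exists) and correctness of the algorithm. Two normalizations set up the argument. First, since the Monge property \eqref{eq:monge} makes $W_{ij}$ nonincreasing in $i$, I treat the item index as the quality rank, so items are already sorted (best first) and no sorting is needed. Second, since each prefix count $\sum_{i\in P_\ell}\sum_{j\le k}x_{ij}$ is nondecreasing in $k$, the constraint at $k$ forces this count below $\min_{k'\ge k}U_{k'\ell}$; replacing each $U_{k\ell}$ by this suffix minimum (computable in $O(np)$ time) yields an equivalent instance in which the bounds are nondecreasing in $k$. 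With $\Delta=1$ the properties are disjoint, so I think of each item as carrying a single color and of the constraints as a per-color cap on how many items of that color may appear in each prefix.

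For exactness I would run an exchange (uncrossing) argument driven by the Monge inequality. Starting from any optimal fractional $x^\star\in\Omega_{m,n}$, call a pair of fractionally used cells an \emph{inversion} if a better item sits (partly) at a later position than a worse item. The inequality $W_{i_1j_1}+W_{i_2j_2}\ge W_{i_1j_2}+W_{i_2j_1}$ says that shifting mass so as to undo an inversion never decreases the objective. The crucial point is that when the two items share a color this shift leaves every prefix count unchanged and therefore preserves feasibility; thus I may assume each color is internally sorted, with its better items at its earlier occupied positions. It then remains to remove fractional values: I would argue that the fractional part of an optimal vertex of \eqref{eq:lp} supports an alternating cycle, and that rotating mass along it in the direction favored by the Monge sign keeps feasibility while not decreasing value, contradicting vertexness unless the point is already integral.

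With exactness in hand, the greedy is the natural one: process positions $k=1,\dots,n$ in order and place at $k$ the highest-quality still-available item whose color $\ell$ has slack, i.e.\ whose current prefix count is below the preprocessed (hence nondecreasing) bound $U_{k\ell}$; because the bounds are nondecreasing, such a placement can never force a later violation, so feasibility is automatic. Optimality I would prove by the same exchange mechanism: given any feasible integral solution, repeatedly applying color-preserving Monge swaps and then correcting the color sequence toward the greedy choice never lowers the value, so greedy attains the integral --- and hence, by the first part, the fractional --- optimum. For the running time, the suffix-minimum preprocessing and the per-position, per-color bookkeeping of counts and caps cost $O(np)$, while a single scan through the quality-sorted items to fill the positions costs $O(m)$, giving the claimed $O(np+m)$ bound.

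The step I expect to be the main obstacle is the cross-color interaction in the exchange argument. Color-preserving swaps are free, but the swaps one would like to perform to enforce a \emph{global} sorted order move mass of two different colors between an earlier and a later position and therefore perturb the prefix counts of both colors, potentially violating an upper bound --- indeed this is exactly why the polytope of \eqref{eq:lp} has fractional vertices (Fact~\ref{non_integral}) and why integrality must be phrased as a property of the Monge objective rather than of the feasible region. Making the cycle-cancellation step rigorous --- showing that the only fractional cycles which could survive at an optimum are ruled out by the Monge sign together with the nondecreasing cap structure --- is where the real work lies, and it is also what connects this LP-based argument to the flow-based exactness guaranteed by Theorem~\ref{thm:flow}.
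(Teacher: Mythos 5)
Your greedy algorithm and its exchange-based optimality proof are essentially the paper's Theorem~\ref{thm:FastGreedy}, and your first exchange step (same-color Monge swaps preserve all prefix counts, hence feasibility) is exactly Phase~1 of the paper's Theorem~\ref{thm:lp1_formal}, i.e.\ the proof that every optimal solution satisfies condition~\eqref{eq:cond}. But the heart of the integrality claim --- ruling out fractional entries once \eqref{eq:cond} holds --- is left as a placeholder in your write-up, and you say so yourself: ``rotating mass along an alternating cycle'' is precisely the move that is \emph{not} available here, because a generic cycle in the fractional support mixes colors, and cross-color rotations change prefix counts of several properties at once and may hit tight upper bounds. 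Appealing to ``vertexness'' cannot rescue this, since Fact~\ref{non_integral} shows the polytope of~\eqref{eq:lp} genuinely has fractional vertices; any correct argument must produce a \emph{feasible improving direction}, and feasibility is the entire difficulty.

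The missing idea in the paper is a careful choice of a single $4$-cycle plus an integrality/slackness argument, not a cycle-cancellation. After reducing to $n=m$ (so row and column sums are exactly $1$) and perturbing to strict~\eqref{eq:monge}, pick the fractional entry $x_{i_0 j_0}\in(0,1)$ with $i_0$ minimal and then $j_0$ minimal; let $\ell$ be the color of $i_0$. By the row-sum constraint there is $j_1>j_0$ with $x_{i_0 j_1}>0$, and by the column-sum constraint there is a fractional $x_{i_1 j_0}$ with $i_1>i_0$; condition~\eqref{eq:cond} forces $i_1\notin P_\ell$. Now move $\eps$ along the $4$-cycle that increases $x_{i_0 j_0},x_{i_1 j_1}$ and decreases $x_{i_0 j_1},x_{i_1 j_0}$. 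The only upper-bound constraints whose left-hand side \emph{increases} are those of color $\ell$ at prefixes $k\in\{j_0,\dots,j_1-1\}$ (the constraints of $i_1$'s color only decrease there). By the lexicographic minimality of $(i_0,j_0)$, every entry of $P_\ell\times[j_0-1]$ is integral and the only nonzero entry of $P_\ell\times\{j_0,\dots,j_1-1\}$ is $x_{i_0 j_0}$ itself, so each such prefix sum equals an integer plus $x_{i_0 j_0}\in(0,1)$; since $U_{k\ell}\in\mathbb{Z}$, none of these constraints can be tight, and a small $\eps>0$ keeps feasibility while strictly increasing the objective by strict Monge --- contradicting optimality. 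This slackness-via-non-integrality step, driven by the specific choice of $(i_0,j_0)$, is what your proposal lacks; your suffix-minimum preprocessing of the $U_{k\ell}$ is harmless but plays no role in closing it.
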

\noindent
The proof relies on a combinatorial argument on the structure of tight constraints that crucially uses  the assumption that $\Delta=1$ and the property \eqref{eq:monge} of the objective function. 
 Note that the result of Theorem~\ref{thm:single} implies in particular that whenever the linear program~\eqref{eq:lp} is feasible then there is also an integer solution -- a feasible ranking.
This can be also argued for the general (LU) variant of the problem and its corresponding LP relaxation.
However, extending Theorem~\ref{thm:single} to this case seems  more challenging and is left as an open problem.

When trying to design algorithms for larger $\Delta$, the   difficulty is that the constrained ranking {\em feasibility} problem remains $\NP$-hard (in fact, even hard to approximate when feasibility is guaranteed) for $\Delta\geq 3$; see Theorems \ref{thm:np-hard} and \ref{thm:hard_approximate}.
Together, these results imply that unless we restrict to feasible instances of the constrained ranking problem, it is impossible to obtain any reasonable approximation algorithm for this problem.
In order to bypass this  barrier, we focus on the (U) variant of the problem and present an {\em algorithmically verifiable} condition for feasibility and argue that it is natural in the context of information retrieval.
For each $1 \leq k \leq n,$ we consider the set 
$$ S_k:= \{ l \in [p] : U_{(k-1)\ell } +1 \leq  U_{k\ell} \}$$
 of all properties whose constraints increase by at least $1$ when going from the $(k-1)$st to the $k$th position. 
We observe that the following {\em abundance of items} condition is sufficient for feasibility:
 \begin{equation}\label{eq:feas1}
 \forall k \; \mbox{there are at least $n$ items} \; i \; \mathrm{s.t. } \; T_i  \subseteq S_k.
 \end{equation}
 Intuitively, this says that there should be always at least a few ways to extend a feasible ranking of $(k-1)$ items to a ranking of $k$ items.
 Simple examples show that this condition can be necessary for certain constraints $\{U_{k\ell}\}$.
In practice, this assumption is almost never a problem -- the available items $m$ (e.g., webpages) far outnumber the size of the ranking $n$ (e.g., number of results displayed in the first page) and the number of properties $p$ (i.e., there are only so many ``types'' of webpages).

We  show that assuming condition~\eqref{eq:feas1}, there is a linear-time algorithm that 
achieves an $(\Delta+2)$-approximation, while only slightly violating the upper-bound constraints.
This result does not need assumption \eqref{eq:monge}, rather only that the $W_{ij}$s are non-negative.
This result is near-optimal; we provide an $\Omega\inparen{\frac{\Delta}{\log \Delta}}$ hardness of approximation result (see Theorem~\ref{thm:hard_approximate}). 

\begin{theorem}[$(\Delta+2)$-approximation algorithm; see Theorem \ref{thm:mul_formal}]\label{thm:mult}
For the constrained ranking maximization problem (U), under the assumption \eqref{eq:feas1}, there is an algorithm that in linear time outputs a ranking $x$ with value at least $\frac{1}{\Delta+2}$ times the optimal one, such that $x$ satisfies the upper-bound constraints with at most a twice multiplicative violation, i.e., 
\[{\sum_{i \in P_\ell}\sum_{j=1}^k x_{ij}\leq 2U_{k\ell}, ~~\mbox{for all $\ell \in [p]$ and $k\in [n]$}.}\]
\end{theorem}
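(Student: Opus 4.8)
The plan is to design a greedy algorithm that processes positions $k = 1, 2, \ldots, n$ in order, at each step maintaining a "budget" for each property that reflects how many more items of that property can still be placed, and using condition~\eqref{eq:feas1} to guarantee that a valid item is always available to fill the current position. The $(\Delta+2)$ factor suggests a charging/LP-duality argument where each item the algorithm picks is "blamed" for blocking at most $\Delta+2$ items of the optimum, while the twice-multiplicative violation suggests that we relax the budgets by a factor of two to create enough slack for the greedy choices.

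Let me think about the structure more carefully.

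---

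First I would set up the greedy framework. Since condition~\eqref{eq:feas1} guarantees that for each $k$ there are at least $n$ items $i$ with $T_i \subseteq S_k$, and $S_k$ is exactly the set of properties whose upper bounds strictly increase at step $k$, any such item can be safely added at position $k$ without violating any upper-bound constraint (because its type only touches properties with newly available budget). The natural greedy is: at position $k$, among all unused items $i$ with $T_i \subseteq S_k$ that keep all constraints satisfied, pick the one maximizing $W_{ik}$ (or the most profitable feasible choice). Condition~\eqref{eq:feas1} with the "$n$ items" guarantee ensures we never get stuck, since at most $k-1 < n$ items have been used so far, so a valid candidate always remains. The hard part will be controlling the approximation ratio, because greedy by value alone can be fooled into blocking many high-value optimum items.

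The key steps in order:
\begin{enumerate}
\item \textbf{Feasibility and termination.} Show via condition~\eqref{eq:feas1} and the $n$-item surplus that the greedy always finds an admissible item at each of the $n$ positions, so it outputs a complete ranking; verify this respects the (possibly doubled) upper bounds by construction.
\item \textbf{Constraint violation bound.} Argue that relaxing each budget $U_{k\ell}$ to $2U_{k\ell}$ gives the algorithm enough room: when greedy places an item whose type includes property $\ell$, charge it against the budget increase at the current step, and use the nested/monotone structure of the $U_{k\ell}$ to show the cumulative count never exceeds $2U_{k\ell}$.
\item \textbf{Approximation ratio via charging.} This is the crux. Fix an optimal ranking $x^\star$. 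For each position, compare the value greedy collects to the value $x^\star$ collects. Use a local-ratio or dual-fitting argument: each greedy selection of an item $i$ at position $k$ can "displace" at most $|T_i| + (\text{a constant})$ items that $x^\star$ uses, and since $|T_i| \le \Delta$, the total displaced value is at most $(\Delta + 2)$ times what greedy earns. Summing over all positions yields $\val(\text{greedy}) \ge \frac{1}{\Delta+2}\val(x^\star)$.
\end{enumerate}

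---

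\textbf{The main obstacle} I anticipate is step~3: proving the precise charging bound of $\Delta+2$ rather than some larger multiple. The subtlety is that a single greedy choice can conflict with the optimum in two distinct ways --- it may occupy a position that the optimum fills with a different, more valuable item (one conflict of ``weight'' accounting for the position itself), and it may consume budget across its $\le \Delta$ properties, each of which could block a distinct optimum item (giving up to $\Delta$ more conflicts). Carefully bounding the overlap so that these contributions sum to at most $\Delta + 2$ --- and not, say, $2\Delta$ --- will require exploiting the monotonicity of $W$ in the position index together with the fact that greedy always takes the locally best admissible item. I expect this to hinge on a clean invariant that, at the moment greedy makes its choice at position $k$, every optimum item it could be blamed for is still ``available'' in the sense needed, so its value is dominated by the greedy pick's value. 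Monotonicity ensures the greedy pick at an \emph{earlier} position is at least as valuable as any item placed later, which is what ties the whole charging scheme together; establishing that domination rigorously is where the real work lies.
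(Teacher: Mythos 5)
Your plan has a genuine gap at exactly the step you yourself flag as the crux, and the gap is created by your choice of algorithm. You process \emph{positions} $k=1,\dots,n$ in order and take the best currently-feasible item, whereas the paper's proof (Theorem~\ref{thm:mul_formal}) processes \emph{cells} $(i,j)\in[m]\times[n]$ in non-increasing order of $W_{ij}$, on a relaxation in which positions may stay empty. This difference is not cosmetic: the entire $(\Delta+2)$ charge rests on the fact that every optimum cell that must be discarded on account of a greedy pick has weight at most that pick's weight. With weight-ordered processing this domination is automatic: a heavier optimum cell was examined before the current pick, and since feasibility is downward-closed it could not have been rejected while sitting inside a feasible optimal solution; the paper then shows that inserting one greedy cell forces the removal of at most $\Delta+2$ optimum cells (one row conflict, one column conflict, and one per property of the picked item, using the laminar structure of each property's constraint family), and telescoping gives the bound. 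With position-ordered processing, the invariant you hope for --- ``every optimum item blamed on a greedy pick was still available at that moment, so its value is dominated'' --- can simply fail: an item $i^\star$ that OPT places at a later position $k^\star$ may be \emph{infeasible} for greedy's partial solution when greedy fills position $k$ (greedy has already exhausted some budget $U_{k\ell}$ with $\ell\in T_{i^\star}$, even though OPT has not, and even though $U_{k'\ell}$ grows for $k'>k$), so greedy never compares against it and $W_{i^\star k^\star}$ can exceed the value of the pick you charge it to. Passing the blame back to the earlier greedy picks that consumed that budget does not obviously terminate when $\Delta\ge 2$, because $i^\star$ may have been blocked at \emph{those} moments by a different property. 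Monge monotonicity does not repair this, since the comparison needed is across different items, not across positions of one item; note also that the theorem is meant to hold for arbitrary non-negative weights (the paper stresses that \eqref{eq:monge} is not needed here), so an argument leaning on monotonicity would at best prove a weaker statement.

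Two smaller structural points. First, under \eqref{eq:feas1} a position-by-position greedy never needs slack merely to finish: at position $k$ there are at least $n$ items with $T_i\subseteq S_k$, of which at most $k-1$ are used, and any such item can be added without violating anything; so ``doubling the budgets to create room'' conflates completeness with approximation, and once greedy is allowed budgets $2U_{k\ell}$ while OPT is held to $U_{k\ell}$, your charging must also absorb that asymmetry. Second, the paper obtains the factor-$2$ violation in a cleaner way that you may want to adopt: phase one (the weight-ordered greedy on the relaxed, partial-ranking problem) satisfies all constraints \emph{exactly}; phase two fills the empty positions with a second partial solution that satisfies all constraints exactly \emph{on its own} (this is where \eqref{eq:feas1} and the pigeonhole count of $n$ items are used); the sum of two individually feasible solutions violates each constraint by at most a factor of $2$, and phase two only adds value, so the $(\Delta+2)$ ratio from phase one survives.
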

\noindent
One can construct artificial instances of the ranking problem, where the output of the algorithm indeed violates upper-bound constraints with a $2$-multiplicative factor.
However, these violations are caused by the presence of high-utility items with a large number of properties.
Such items are unlikely to appear in real-life instances and thus we expect the practical performance of the algorithm to be better than the worst-case bound given in Theorem~\ref{thm:mult} suggests. 
Lastly we summarize our hardness results for the constrained ranking problem.

\begin{theorem}[Hardness Results -- Informal]\label{thm:hardnessAll}
The following variants of the constrained ranking feasibility (U) and constrained ranking maximization (U) problem are $\NP$-hard.
\begin{enumerate}
\item Deciding feasibility for the case of $\Delta\geq 3$  (Theorem~\ref{thm:np-hard}).
\item Under the feasibility condition~\eqref{eq:feas1}, approximating the optimal value of a ranking within a factor $O\inparen{\nfrac{\Delta}{\log \Delta}}$,  for any $\Delta \geq 3$ (Theorem~\ref{thm:hard_approximate}).
\item Deciding feasibility when only the number of items $m$, number of  positions $n$, and upper-bounds $u$ are given as input; the properties are fixed for every $m$ (Theorem~\ref{thm:fixedhardness}).
\item For every constant $c$, deciding between whether there exists a feasible solution or every solution violates some constraint by a factor of $c$ (Theorem~\ref{thm:violation}).
\end{enumerate} 
\end{theorem}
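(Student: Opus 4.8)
Since Theorem~\ref{thm:hardnessAll} bundles four separate statements, I would prove each by an independent reduction, reusing one common gadget: setting every upper bound to $1$ turns a ``budget of one per property'' into a disjointness (packing/matching) constraint, and then feasibility or value maximization becomes the corresponding packing problem.

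\textbf{Part 1 (feasibility is $\NP$-hard for $\Delta\ge 3$).} I would reduce from $3$-Dimensional Matching. Given disjoint sets $X,Y,Z$ of size $t$ and triples $T\subseteq X\times Y\times Z$, I create one item per triple, take the $3t$ ground elements as the properties (so $T_i$ is the triple, giving $\Delta=3$), set $n=t$, and impose $U_{k\ell}=1$ for every property $\ell$ and every prefix $k$. A feasible ranking must fill all $t$ positions with pairwise disjoint triples (each element used at most once), and $t$ disjoint triples cover exactly $3t$ elements, i.e.\ a perfect matching. Adding isolated dummy properties extends this to any $\Delta\ge 3$. Because all bounds equal $1$, the prefix structure is irrelevant, so general nonnegative weights suffice and \eqref{eq:monge} is not needed.

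\textbf{Part 2 (hardness of approximation under \eqref{eq:feas1}).} The same gadget converts value maximization into weighted $\Delta$-set packing: items are the sets, properties are ground elements, $W_{ij}=w_i$ encodes the set weight, and $U_{k\ell}=1$ forces the chosen items to have pairwise disjoint property sets, so the optimum equals the maximum-weight $\Delta$-set packing. The latter is $\NP$-hard to approximate within $\Omega\inparen{\nfrac{\Delta}{\log\Delta}}$ (Hazan--Safra--Schwartz). To meet the abundance condition \eqref{eq:feas1}, I would append $n$ dummy items with empty type $T_i=\emptyset$ and weight $0$; these lie in every $S_k$, so a feasible ranking always exists by padding with dummies, while the optimum is unchanged. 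Sorting items by weight keeps $W_{ij}=w_i$ compatible with \eqref{eq:monge}, should one want the hardness to persist for Monge objectives.

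\textbf{Part 3 (hardness with a fixed property structure).} Here only $m,n$ and the numeric bounds $u$ are adversarial, while the set system $\{P_\ell\}$ is a prescribed function of $m$, so the combinatorial freedom of Parts~1--2 is gone. The plan is to fix a single ``universal'' set system on $[m]$ that realizes all subsets of a small property universe (so every color pattern is available among the items), and then to simulate an arbitrary instance of a strongly $\NP$-hard number problem (Subset Sum or $3$-Partition) purely through the nested numeric bounds $u_{k\ell}$ along positions $1,\dots,n$, reading item sizes off their positions and encoding the target sums into the bounds. The main obstacle of the whole theorem is exactly this universality: one must show that a property assignment depending only on $m$, combined with numbers alone, still captures full $\NP$-hardness -- the entire combinatorial instance has to be smuggled into $u$.

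\textbf{Part 4 (feasible versus factor-$c$ violation, for every constant $c$).} I would obtain the gap by amplifying the infeasibility of Part~1: in the YES case the instance is feasible, while in the NO case some upper bound is violated, but a priori only additively by a unit. To push this to a multiplicative factor $c$, I would apply a blow-up/tensoring construction that replaces each property by a bundle and each item by a group of copies, so that an unavoidable single overload in the base instance forces an overload of factor at least $c$ in the product, while the instance stays polynomial for each fixed $c$; alternatively one can reduce from a hypergraph-coloring-type problem whose soundness already guarantees heavy concentration on some constraint. The delicate point is controlling this amplification so that the violation grows to an arbitrary constant without blowing up the size or accidentally admitting a near-feasible solution.
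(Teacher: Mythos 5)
Your Parts 1 and 2 are correct and are essentially the paper's own reductions: Theorem~\ref{thm:np-hard} is proven by exactly your gadget (items are hyperedges, properties are vertices, all bounds $U_{k\ell}=1$, so feasibility is $\Delta$-hypergraph matching, of which 3-Dimensional Matching is a special case), and Theorem~\ref{thm:hard_approximate} is proven by the same gadget plus zero-weight items with empty type --- your ``dummy'' items are precisely the paper's ``improper'' items, used to guarantee condition~\eqref{eq:feas1} while preserving the $O\inparen{\nfrac{\Delta}{\log\Delta}}$ inapproximability of \cite{HSS03}.

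Parts 3 and 4, however, are only plans, and each misses the concrete idea that makes the corresponding proof work. For Part 3 (Theorem~\ref{thm:fixedhardness}) no number-problem encoding is needed: the paper fixes $\mathcal{P}_m$ to be all two-element subsets of $[m]$ ($O(m^2)$ properties), and then the \emph{numbers alone} encode an arbitrary graph $G$ --- set $U_{n,e}=1$ for pairs $e$ that are edges of $G$ and the trivial bound $U_{k,e}=k$ for non-edges --- so feasible rankings are exactly independent sets of size $n$, and hardness follows from independent set. Your Subset-Sum/3-Partition route is undeveloped, and it is not clear how nested prefix bounds over a ``universal'' set system would simulate such a problem; the missing observation is simply that the choice of which constraints are tight versus vacuous is itself numeric data. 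For Part 4 (Theorem~\ref{thm:violation}) your amplification idea runs into exactly the obstacle the paper highlights: an infeasible base instance only forces a unit (additive-one) overload on \emph{some} constraint, and blowing up items and properties into bundles does not force that overload to concentrate into a factor-$c$ violation --- a solution can spread the excess thinly over the copies, violating each bundled constraint only slightly. The paper's mechanism is different and essential: following \cite{CK99}, it places a bound-$1$ constraint on every $(c+1)$-clique of the input graph, so any ranking that violates all constraints by at most a factor of $c$ selects $n$ vertices inducing a $K_{c+1}$-free subgraph; a Ramsey-number bound then extracts an independent set of size $\Omega\inparen{n^{1/(c+1)}}$ inside it, and combining this with the $|V|^{1-\eps}$ inapproximability of independent set \cite{Hastad96,Zuckerman07} yields the gap hardness. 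Without this clique-plus-Ramsey step (or a genuine substitute for it), your Part 4 does not go through.
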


\section{Other Related Work}\label{sec:other_work}
Information retrieval, which focuses on selecting and ranking subsets of data, has a rich history in computer science, and is a well-established subfield in and of itself; see, e.g., the foundational work by \cite{Salton1988}. 
The probability ranking principle (PRP) forms the foundation of information retrieval research \cite{MK1960,Robe1977}; in our context it states that \emph{a system's ranking should order items by decreasing value.} 
Our problem formulation and solutions are in line with this -- \emph{subject to satisfying the diversity constraints}. 

A related problem is diverse data summarization in which a subset of items with varied properties must be selected from a large set \cite{panigrahi2012online,FatML}, or similarly, voting with diversity constraints in which a subset of items of people with varied properties or attributes must be selected via a voting procedure \cite{monroe1995fully,celis2017group}. However, the formulation of these problem is considerably different as there is no need to produce a ranking of the selected items, and hence the analogous notion of constraints is more relaxed. 
Extending work on fairness in classification problems \cite{zemel2013learning}, the fair ranking problem has also been studied as an (unconstrained) multi-objective optimization problem, and various fairness metrics  of a ranking have been proposed \cite{yang2016measuring}.

Combining the learning of values along with the ranking of items has also been studied \cite{radlinski2008learning,slivkins2013ranked}; in each round an algorithm chooses an ordered list of $k$ documents as a function of the estimated values $W_{ij}$ and can receive a click on one of them. These clicks are used to update the estimate of the $W_{ij}$s, and 
bounds on the regret (i.e., learning rate) can be given using a bandit framework. In this problem, while there are different types of items that can affect the click probabilities, there are no constraints on how they should be displayed.

Recent work has shown that, in many settings, there are impossibility results that prevent us from attaining both \emph{property} and \emph{item} fairness \cite{kleinberg_fatml}. Indeed, our work focuses on ensuring property fairness (i.e., no property is overrepresented), however this comes at an expense of item fairness (i.e., depending on which properties an item has, it may have much higher / lower probability of being displayed than another item with the same value). In our motivating application we deal with the ranking of documents or webpages, and hence are satisfied with this trade-off. However, further consideration may be required if, e.g., we wish to rank people as this would give individuals different likelihoods of being near the top of the list based on their properties rather than solely on their value.

\subsection*{Organization of the rest of the paper}
In Section~\ref{sec:other_work} we discuss other related work.
Section~\ref{sec:techoverview} contains an overview of the proofs of our main results.
Section~\ref{sec:flow} contains the proof of Theorem~\ref{sec:flow} -- a polynomial time algorithm for the $\Delta=1$ case.
The proof of Theorem~\ref{thm:dp} on the exact algorithm for general $\Delta$ is presented in Section~\ref{sec:dp}.
Section~\ref{sec:lp} contains the proof of Theorem~\ref{thm:single} which shows that there exists an integral solution to~\eqref{eq:lp}; it also provides a simple and fast greedy algorithm for the ranking maximization problem (U) for $\Delta=1$.
 Section \ref{sec:mul} contains the proof of Theorem \ref{thm:mult}, which gives our approximate result on the ranking maximization problem (U) for general $\Delta$.
Our hardness results are presented in Section~\ref{sec:hardness}. 
 In Section~\ref{sec:discussion} we provide a discussion of possible directions for future work and open problems.
 Finally in Appendix~\ref{sec:metrics} we give a brief overview of some common ranking metrics and explain how  they can be  captured by values $W_{ij}$ that satisfy \eqref{eq:monge}.

\section{Proof Overviews}
\label{sec:techoverview}

\medskip
\parag{Overview of the proof of  Theorem~\ref{thm:dp}.}
We first observe that the constrained ranking maximization problem has a  simple geometric interpretation. 
Every item $i\in [m]$ can be assigned a \emph{property vector} $t_i \in \{0,1\}^p$ whose $\ell$-th entry is $1$ if  item $i$ has property $\ell$ and $0$ otherwise.
We can then think of  the constrained ranking maximization problem as finding a sequence of $n$ distinct items $i_1, i_2, \ldots, i_n$ such that 
$L_k \leq \sum_{j=1}^k t_{i_j} \leq U_k$ for all $k\in [n]$,
where $U_k$ is the vector whose $\ell$-th entry is $U_{\ell k}$.
In other words, we require that the partial sums of the vectors corresponding to the top $k$ items in the ranking stay within the region $[L_{k1},U_{k1}] \times [L_{k2},U_{k2}] \times \cdots \times [L_{kn},U_{kn}]$ defined by the fairness constraints.

Let $Q:=\{t_i: i\in [m]\}$ be the set of all the different property vectors $t_i$ that appear for items $i\in [m]$, and let us denote its elements by $v_1, v_2, \ldots, v_q$.  
A simple but  important observation is that whenever two items $i_1, i_2\in[m]$ (with say $i_1<i_2$) have the same property vector: $t_{i_1} = t_{i_2}$, then in every optimal solution either $i_1$ will be ranked above $i_2$, only $i_1$ is ranked, or neither is used.
 This follows from the assumption that the weight matrix is monotone in $i$ and $j$ and satisfies the  property as stated in \eqref{eq:monge}. 

Let us now define the following sub-problem that asks for the \emph{property vectors} of a feasible solution: Given a tuple $(s_1, s_2, \ldots, s_q)\in \N^q$ such that $k=s_1+s_2+ \cdots + s_q \leq n$, what is the optimal way to obtain a feasible ranking on $k$ items such that $s_j$ of them have property vector equal to $v_j$ for all $j=1,2, \ldots, q$? 
Given a solution to this sub-problem, using the observation above, it is easy to determine which items should be used for a given property vector, and in what order.
Further, one can easily solve such a sub-problem given the solutions to smaller sub-problems (with a smaller sum of $s_j$s), resulting in a dynamic programming algorithm with $O(n^q)$ states and, hence, roughly the same running time. 

\medskip
\parag{Overview of the proof of  Theorem~\ref{thm:flow}.}
The main idea is to reduce ranking maximization to the minimum cost flow problem and then observe several structural properties of the resulting instance which allow one to solve it efficiently (in $\widetilde{O}(n^2m)$ time).

Given an instance of the constrained ranking maximization problem (U), we construct a weighted flow network $G=(V,E)$ such that every feasible ranking corresponds to a feasible flow of value $n$ in $G$.
Roughly, for every property $\ell$ a chain of $n$ vertices is constructed so that placing item $i$ (such that $i\in P_\ell$) at position $k$ corresponds to sending one unit of flow through the chain corresponding to item $i$ up to its $k$th vertex and then sending the flow to sink.
Edge weights in these gadgets (chains) are chosen in such a way that the cost of sending a unit through this path is $-W_{i,k}$.
The capacities in chains corresponding to properties implement upper-bound constraints.
The lower-bound constraints can be also enforced by putting appropriate weights on  edges of these chains.

The instance of the minimum cost flow problem we construct has $O(np)$ vertices and $O(nm)$ edges and is acyclic, which allows to replace the application of the Bellman-Ford algorithm in the first phase of the Successive Shortest Path algorithm by a linear-time procedure.
This then easily leads to an implementation in $O(n^2m \log m)$ time.

\medskip

\parag{Overview of the proof of  Theorem~\ref{thm:single}.} 
Unlike the $\Delta=0$ case where the LP-relaxation \eqref{eq:lp} has no non-integral vertex (it is the assignment polytope) , even when $p=1$, fractional vertices can arise (see Fact~\ref{non_integral}).
Theorem \ref{thm:single} implies that for $\Delta = 1$, although the feasible region of~\eqref{eq:lp} is not integral in all directions, it is along the directions of interest.
In the proof we first reduce the problem to the case when $m=n$ (i.e., when one has to rank all of the items) and $w$ has the {\em strict} form of  property \eqref{eq:monge} (i.e., when the inequalities in assumption \eqref{eq:monge} are strict). 
Our strategy  then is to prove that for every fractional feasible solution $x\in \Omega_{m,m}$ there is a direction $y\in \R^{m \times m}$ such that the solution $x^\prime:=x+\eps y$ is still feasible (for some $\eps>0$) and its weight is larger than the weight of $x$. 
This implies that every optimal solution is necessarily integral. 

Combinatorially, the directions we consider correspond to $4$-cycles in the underlying complete bipartite graph, such that the weight of the matching can be improved by swapping edges along the cycle. 
The argument that shows the existence of such a cycle makes use of the special structure of the constraints  in this family of instances. 

To illustrate the approach, suppose that there exist two items $i_1 < i_2$ that have the same property $\ell \in [p]$, and for some ranking positions $j_1<j_2$ we have 

\begin{equation}\label{eq:conf}
x_{i_1j_2}>0 ~~~~\mbox{and}~~~~x_{i_2 j_1}>0.
\end{equation}
\noindent
Following the strategy outlined above, consider $x^\prime=x+\eps y$ with $y\in \R^{m\times m}$ to be zero everywhere except $y_{i_1j_1}=y_{i_2j_2}=1$ and $y_{i_1 j_2}=y_{i_2j_1}=-1$.
 We would like to prove that the weight of $x^\prime$ is larger than the weight of $x$ and that $x^\prime$ is feasible for some (possibly small) $\eps>0$. 
The reason why we gain by moving in the direction of $y$ follows from  property \eqref{eq:monge}.
 Feasibility in turn follows because $y$ is orthogonal to every constraint defining the feasible region. 
  Indeed, the only constraints involving items $i_1, i_2$ are those corresponding to the property $\ell$. 
  Further, every such constraint is of the form\footnote{By $\inangle{\cdot, \cdot}$ we denote the inner product between two matrices, i.e., if $x, y \in \R^{m \times n}$ then $\inangle{x, y} := \sum_{j=1}^m \sum_{i=1}^n x_{ij} y_{ij}.$} $\inangle{1_{R_k}, x} \leq U_{k\ell}$ where $1_{R_k}$ is the indicator vector of a rectangle $R_k:=P_\ell\times  [k]$.
   Such a rectangle contains either all non-zero entries of $y$, two non-zero entries (with opposite signs), or none. In any of these cases, $\inangle{1_{R_k}, y} =0$. 

Using a reasoning as above, one can show that no configuration of the form~\eqref{eq:conf} can appear in any optimal solution for $i_1, i_2$ that share a property $\ell$. 
This implies that the support of every optimal solution has a certain structure when restricted to items that have any given property $\ell \in [p]$; this structure allows us to find an improvement direction in case the solution is not integral.
To prove integrality we show that for every fractional solution $x\in \R^{m\times m}$ there exists a fractional entry $x_{ij}\in (0,1)$ that can be slightly increased without violating the fairness constraints. 
Moreover since the $i$-th row and the $j$-th column must contain at least one more fractional entry each (since the row- and column-sums are $1$), we can construct  (as above) a direction $y$, along which the weight can be increased. 
The choice of the corresponding entries that should be altered requires some care, as otherwise we might end up violating fairness constraints. 

The second part of Theorem~\ref{thm:single} is an algorithm for solving the constrained ranking maximization problem for $\Delta=1$ in optimal (in the input size) running time of $O(np+m)$. 
We show that a natural greedy algorithm can be used. 
More precisely, one iteratively fills in ranking positions by always selecting the {\em highest value} item that is still available and does not lead to a constraint violation. 
 An inductive argument based that relies on property~\ref{eq:monge} and the $\Delta=1$ assumption gives the correctness of such a procedure.

\medskip

\parag{Overview of the proof of  Theorem~\ref{thm:mult}.} 
Let $\Delta>1$ be arbitrary. 
 The most important part of our algorithm is a greedy procedure that finds a large weight solution to a slightly relaxed problem in which not all positions in the ranking have to be occupied. 
 It processes pairs $(i,j) \in [m] \times [n]$ in non-increasing order of weights $W_{ij}$ and puts item $i$ in position $j$ whenever this does not lead to constraint violation. 
%

To analyze the approximation guarantee of this algorithm let us first inspect the combinatorial structure of the feasible set.
In total there are $p \cdot n$ fairness constraints in the problem and additionally $m+n$ ``matching'' constraints, saying that no ``column'' or ``row'' can have more than a single one in the solution matrix $x\in \{0,1\}^{m \times n}$. 
However, after relaxing the problem to the one where not all ranking positions have to be filled, one can observe that the feasible set is just an intersection of $p+2$ matroids on the common ground set $[m] \times [n]$.
Indeed, two of them correspond to the matching constraints, and are partition matroids.
The remaining $p$ matroids correspond to properties: for every property $\ell$ there is a chain of subsets $S_1 \subseteq S_2 \subseteq \cdots \subseteq S_n$ of $[m] \times [n]$ such that 
$$\mathcal{I}_{\ell}=\{S \subseteq [m] \times [n]: |S\cap S_k| \leq  U_{k\ell} \mbox{ for all }k=1,2, \ldots, n\}$$ is the set of independent sets in this (laminar) matroid.
In the work~\cite{Jenkyns76} it is shown that the greedy algorithm run on an intersection of $K$ matroids yields $K$-approximation, hence $(p+2)$-approximation of our algorithm follows.

To obtain a better -- $(\Delta+2)$-approximation bound, a more careful analysis is required. 
The proof is based on the fact that, roughly, if a new element is added to a feasible solution $S$, then at most $\Delta+2$ elements need to be removed from $S$ to  make it again feasible. 
Thus adding greedily one element can cost us absence of $\Delta+2$ other elements of weight at most the one we have added.
This idea can be formalized and used to prove the $(\Delta+2)$-approximation of the greedy algorithm.
This is akin to the framework of $K$-extendible systems by~\cite{Mestre06} in which this greedy procedure can be alternatively analyzed.
Finally, we observe that since the problem solved was a relaxation of the original ranking maximization problem, the approximation ratio we obtain with respect to the original problem is still $(\Delta+2)$. 

It remains to complete the ranking by filling in any gaps that may have been left by the above procedure. 
This can be achieved in a greedy manner that only increases the value of the solution, and violates the constraints by at most a multiplicative factor of $2$. 

\medskip
\parag{Overview of the proof of Theorem~\ref{thm:hardnessAll}.} 
Our hardness results are based on a general observation that one can encode various types of packing constraints using instances of the constrained ranking maximization (U) and feasibility (U) problem.
The first result (Theorem {\ref{thm:np-hard}})
-- $\NP$-hardness of the feasibility problem (for $\Delta \geq 3$) is established by a reduction from the hypergraph matching problem.
Given an instance of the  hypergraph matching problem one can think of its hyperedges as items and its  vertices as properties.
Degree constraints on vertices can then be encoded by upper-bound constraints on the number of items   that have a certain property in the ranking.
The inapproximability result (Theorem \ref{thm:hard_approximate})
 is also established  by a reduction from the hypergraph matching problem, however in this case  one needs to be more careful as the reduction is required to output instances that are feasible.

Our next hardness result (Theorem~\ref{thm:fixedhardness}) illustrates that the difficulty of the constrained ranking optimization problem (U) could be entirely due to the upper-bound numbers $U_{k\ell}$s.
In particular, even when  the part of the input corresponding to which item has which property is fixed, and only depends on $m$  (and, hence, can be {\em pre-processed} as in \cite{FeigeJ12}), the problem remains hard.
This is proven via a reduction from the independent set problem.
The properties consists of all pairs of items $\{i_1, i_2\}$ for $i_1, i_2 \in [m]$.
Given any graph $G=(V,E)$ on $m$ vertices, we can set up a constrained ranking problem whose solutions are independent sets in $G$ of a certain size.
Since every edge $e=\{i_1,i_2\}\in E$ is  a property, we can set a constraint  that allows at most one item (vertex) from this property (edge) in the ranking. 
  
Finally Theorem~\ref{thm:violation} states that it is not only hard to decide feasibility but even to find a solution that does not violate any constraint by more than a constant multiplicative factor $c\in \mathbb{N}.$
The obstacle in proving such a hardness result is that, typically, even if a given instance is infeasible, it is  easy to find a solution that  violates {\it many} constraints by a small amount.
To overcome this problem we employ an inapproximability result for the maximum independent set problem by \cite{Hastad96} and an idea by~\cite{CK99}.
Our reduction (roughly) puts a constraint on every $(c+1)-$clique in the input graph $G=(V,E)$, so  that at most one vertex (item) is picked from it.
Then a solution that does not violate any constraint by a multiplicative factor more than  $c$  corresponds to a set of vertices $S$ such that the induced subgraph $G[S]$ has no $c$-clique. 
Such a property  allows us to prove (using elementary bounds on Ramsey numbers) that $G$ has a large independent set.
Hence, given an algorithm that is able to find a feasible  ranking with no more than a $c$-factor violation of the constraints, we can approximate the maximum size of an independent set in a graph $G=(V,E)$ up to a factor of roughly $|V|^{1-\nfrac{1}{c}}$; which is hard by~\cite{Hastad96}.

\vspace{-2mm}
\section{A Polynomial Time algorithm for $\Delta=1$}\label{sec:flow}
\begin{proofof}{of Theorem~\ref{thm:flow}}
We show how to solve the constrained ranking maximization problem for $\Delta=1$ by reformulating it as a minimum cost flow problem in a directed network.
For clarity, we begin with the case when only upper-bound constraints are present and then modify the construction to deal with lower-bounds as well.

Given an instance of the constrained ranking maximization problem we build a directed, weighted, capacitated graph $G=(V,E)$ such that the solution of the min cost flow problem on $G$ allows us to solve the instance of the ranking problem.

There are two special vertices: $s,t\in V$ -- source and sink respectively.
For every position in the ranking $k\in [n]$ there is a vertex $\gamma_{k}\in V$ corresponding to it.
Further, for every property $\ell\in [p]$ there are $(n+1)$ vertices $\rho_{\ell,0}, \rho_{\ell,1}, \ldots, \rho_{\ell, n}$ in $V$.
Thus in total there are $|V|=(n+1)p+n+2=O(np)$ vertices in $G$.

We proceed to describing edges.
For every position $k$ there is an edge $e$ from $\gamma_k$ to $t$ with unit capacity and zero cost.
Further, for every property $\ell \in [p]$ and every position $k=1,2, \ldots, n$ there is an edge $e$ from $\rho_{\ell,k-1}$ to $\gamma_{k}$ with unit capacity and zero cost.

Finally the vertices $\{\rho_{\ell,k}\}_k$ are arranged in a directed chain from $k=n$ to $k=0$.
To describe the corresponding edges, let us denote the items having property $\ell$  (in increasing order) by $i_1, i_2, i_3,\ldots, i_{S_\ell}$ (where $S_\ell$ is the number of items having property $\ell$)
\begin{itemize}
\item  the vertex $s$ is connected to $\rho_{\ell,n}$ by $S_\ell$ edges (one for each item of this property), each of unit capacity, and their costs are zero,
\item for every $1\leq k \leq n$ there are exactly $U_{\ell,k}$ edges going from $\rho_{\ell,k}$ to $\rho_{\ell, k-1}$, each of unit capacity. Their costs are as follows
\begin{equation*}
W_{i_1,k+1} - W_{i_1,k}, ~~~W_{i_2, k+1}-W_{i_2, k}, ~~~\ldots,~~~ W_{i_{U_{\ell,k}},k+1}-W_{i_{U_{\ell,k}},k},
\end{equation*}
where we define $W_{i,n+1}:=0$ for every item $i\in [m]$.
\end{itemize}

The idea behind the above construction is that sending one unit of flow from $s$ to $\rho_{\ell, n}$ (for some position $\ell$) and then to $\rho_{\ell, n-1}$ and all the way to $\rho_{\ell, k}$ (for some position $k$) incurs a cost of (negative) $W_{i_1, k}$ and corresponds to placing $i_1$ (the ``best'' item of property $\ell$) at position $k$. The flow is then sent through $\gamma_k$ to $t$ which guarantees that no position in the ranking is used twice. 
Thus every flow of value $n$ corresponds to a feasible ranking and its cost is the negative of the total weight of that given ranking.
Similarly, the optimal ranking has a flow of value $n$ corresponding to it.
We now prove these intuitions formally.

\parag{Rankings $\to$ Flows.}
We show that if $\pi^\star: [n] \to [m]$ is the optimal ranking then it can be translated into a feasible flow of value $n$ and cost $-\sum_{k=1}^n W_{\pi^\star(k),k}$.
Let us start by a simple observation: for every property $\ell$ the relative order of appearance in $\pi^\star$ of items having this property is consistent with the ordering of their indices (i.e. $i_1<i_2<i_3<\ldots$).
This follows from the Monge property, as if this was not satisfied, then swapping some of the items (an incostistent pair) would cause the ranking to have higher total weight (see the proof of Theorem~\ref{thm:dp_formal} for more details).
 
To construct a feasible flow from $\pi^\star$, we scan the ranking from top $k=1$ to bottom $k=n$ and add one flow path at a time.
Suppose that $\pi^\star(k)=i$, i.e., item $i$ is placed at position $k$ in $\pi$.
Let $\ell$ be the unique property item $i$ has.
We send one unit of flow along the following path
\begin{equation}\label{eq:flow_path}
s\to \rho_{\ell,n} \to \rho_{\ell, n-1} \to \ldots \to \rho_{\ell, k-1} \to \gamma_k \to t
\end{equation} 
We pick the edges on this path so that all of them correspond to the item $i$ -- this is possible by our observation above.
The cost contributed by this path is
$$0+(W_{i,n+1}-W_{i,n})+(W_{i,n}-W_{i,n-1})+\ldots+(W_{i,k+1}-W_{i,k})+0+0=-W_{i,k}.$$
Thus the total cost is $-\sum_{k=1}^n W_{\pi^\star(k),k}$ as claimed.
It remains to observe that such a flow is indeed feasible, since by the construction, the capacities along the $\{\rho_{\ell,k}\}_k$ chains correspond to upper bound constraints $\{U_{\ell,k}\}_k$.

\parag{Flows $\to$ Ranking.}
Suppose $f$ is a feasible flow of value $n$ and cost $\mathrm{cost}(f)$.
We show that there is a feasible ranking $\pi$ of weight at least $-\mathrm{cost}(f)$.
Before we attempt constructing a ranking let us first modify the flow $f$ to force a certain structure on it.
Consider any property $\ell$ and any position $k$ in the ranking.
If there are $c$ units of flow sent between $\rho_{\ell, k+1}$ and $\rho_{\ell,k}$ then we will assume that they go through $c$ edges of smallest cost.
In other words, if (as before) $i_1, i_2, i_3, \ldots$ are all the items having property $\ell$ (in increasing order) then the edges we pick have costs
\begin{equation*}
W_{i_1,k+1} - W_{i_1,k}, ~~~W_{i_2, k+1}-W_{i_2, k}, ~~~\ldots,~~~ W_{i_c,k+1}-W_{i_c,k}.
\end{equation*}
The fact that these are the edges with smallest cost follows from the Monge property.

Given such a structured flow we can easily construct a ranking with the same weight.
Suppose that there are $c$ units of flow entering the chain $\{\rho_{\ell,k}\}_k$ and leaving it through vertices $\gamma_{k_1}, \gamma_{k_2}, \ldots, \gamma_{k_c}$, then we set
$$\pi(k_1)=i_1, ~~~\pi(k_2)=i_2, ~~~\ldots, \pi(k_c)=i_c,$$
where, as above, $i_1, i_2, \ldots, i_c$ are the ``best'' items of property $\ell$.
The cost of this part of the flow (corresponding to $\ell$) is then
$$-(W_{i_1,k_1}+W_{i_2,k_2}+\ldots+W_{i_c,k_c})$$
and the total is $-\sum_{k=1}^nW_{\pi(k),k}$.
The feasibility of this ranking follows from the fact that at most $U_{\ell,k}$ units of flow reach $\rho_{\ell,k}$ and thus at most that many items of this property are ranked at top-$k$ positions.

\begin{figure}[t!]
\centering
\caption{An illustration of the graph $G$: the chain corresponding to a property $\ell$ and how it is connected to the $\gamma$-vertices. Note that the number of edges between $\rho_{\ell,k}$ and $\rho_{\ell, k+1}$ is $U_{\ell,k}$ and hence it might grow with $k$.}\label{fig:flow}
  \includegraphics[width=.8\columnwidth]{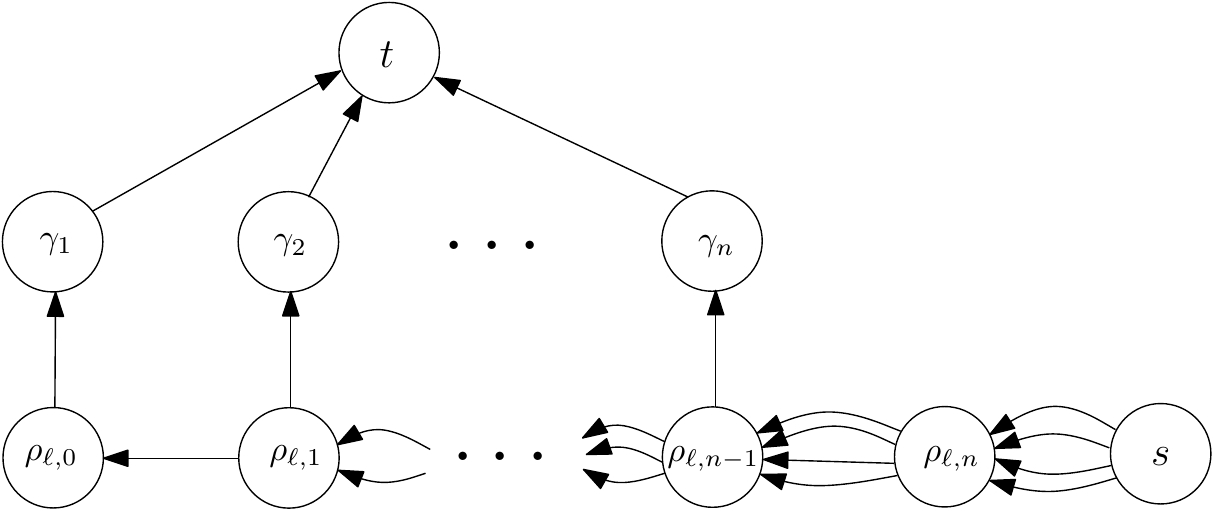}
\end{figure}

 \medskip
So far our reduction captures only the (U) variant of the ranking problem.
To deal with lower-bounds, we will slightly modify the edge costs of the constructed flow network.
To this end we introduce a number $M$ to be very large (i.e., significantly larger than any entry in the weight matrix $W$).
For a property $\ell$ and a position $k$ in the ranking we modify the costs of the edges between $\rho_{\ell,k}$ and $\rho_{\ell, k-1}$ to 
\begin{equation*}
W_{i_1,k+1} - W_{i_1,k}-M, ~\ldots, ~W_{i_{L_{\ell,k}}, k+1}-W_{i_{L_{\ell,k}}, k}-M,~W_{i_{L_{\ell,k}+1}, k+1}-W_{i_{L_{\ell,k}+1}, k} ~~\ldots,~ W_{i_{U_{\ell,k}},k+1}-W_{i_{U_{\ell,k}},k},
\end{equation*}
i.e., the $L_{\ell, k}$ smallest cost edges have their cost further decreased by $M$.
We can now make $M$ so large that we force all these $L_{\ell,k}$ edges to be sent flow through.
The total (negative) cost of this is $ \sum_{\ell \in [p]}\sum_{k\in [n]} L_{\ell,k}$.
Thus in particular, the minimum cost flow $f^\star$ still corresponds to a feasible ranking (if such exists) and the weight of the optimal ranking is recovered as
$M \cdot ( \sum_{\ell \in [p]}\sum_{k\in [n]} L_{\ell,k} )- \mathrm{cost}(f^\star).$

To solve the constructed instance of minimum cost flow, note first that the network $G$ does not contain any negative-cost cycle (in fact it is acyclic), which is a sufficient condition for the problem to be polynomial-time solvable.

One can use the Successive Shortest (augmenting) Paths algorithm (see e.g.~\cite{AMO93}) to solve this problem efficiently.
For the initial computation of the node potentials one does not need to use the Bellman-Ford algorithm (which has running time $O(|V|\cdot |E|)=O(pmn^2)$) but take advantage of the fact that $G$ is acyclic and compute single source shortest paths from $s$ by sorting the vertices topologically and running a simple dynamic programming procedure.
After that, the algorithm needs to augment the flow $n$ times, each augmentation requires to run Dijkstra's algorithm once, i.e., takes $O(|E| \log |V|)=O(mn \log m)$ time.
Thus the total running time of the algorithm is $O(nm+n\cdot mn \log m)=O(n^2m \log m)$.
\end{proofof}

\section{Dynamic Programming-based Exact Algorithm}\label{sec:dp}

Recall that for an instance of constrained ranking maximization, every item $i \in [m]$ has a type $T_i$ assigned to it, which is the set of all properties item $i$ has. 
In this section, we present an exact dynamic programming algorithm for solving the constrained ranking maximization problem which is efficient when the number of distinct types $q$ in the instance is small. 
We start by providing a geometric viewpoint of the problem, which (arguably) makes it easier to visualize and provides us with convenient notation under which the dynamic programming algorithm is simpler to state and understand.
\subsection{Geometric interpretation of fairness constraints}
\label{sec:geometric}
Recall that in an instance of constrained ranking maximization we are given $m$ items, $n$ ranking positions and $p$ properties, together with fairness constraints on them. 
Let $t_i=1_{T_i} \in \{0,1\}^p$ be the vector indicating which sets of $P_\ell$ item $i$ belongs to (we call this the \emph{type} of $i$).

Note that every ranking can be described either by a binary matrix $x\in \{0,1\}^{m \times n}$ such that $x_{ij}=1$ if and only if item $i$ is ranked at position $j$, or alternatively by a one-to-one function $\pi: [n] \to [m]$ such that $\pi(j)$ is the item ranked at position $j$, for every $j\in [n]$. 
Using the latter convention we can encode the fairness condition as
\[\forall ~{k\in [n]}~~~~L_k \leq \sum_{j=1}^k t_{\pi(j)} \leq U_k,\]
where $L_k=(L_{k1}, L_{k2}, \ldots, L_{kp})^\top$ and  $U_k=(U_{k1}, U_{k2}, \ldots, U_{kp})^\top$ are the vectors of lower and upper-bounds at position $k$. 
In other words, a ranking is feasible if and only if the $k$th partial sum of all $t_i$ vectors of items at top-$k$ positions belongs to the rectangle $[L_{k1},U_{k1}] \times [L_{k2},U_{k2}] \times \ldots \times [L_{kp},U_{kp}]$, for every $k\in [n]$.

\subsection{The dynamic programming algorithm}

\begin{theorem}\label{thm:dp_formal}
There is an algorithm that solves the constrained ranking maximization problem (LU) when the objective function $w$ satisfies property \eqref{eq:monge} in $O(mp+pqn^q)$ time (where $q$ is the number of different types of items). 
\end{theorem}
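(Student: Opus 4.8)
The plan is to realize the dynamic program over the ``profile'' of a partial ranking, where a profile records how many items of each of the $q$ types have been placed so far. Following the geometric interpretation from Section~\ref{sec:geometric}, a feasible ranking is a sequence of items whose partial type-sums stay in the prescribed rectangles. The key reduction is to observe that it suffices to decide \emph{how many} items of each type occupy the top-$k$ positions, rather than \emph{which} items and in what order. Concretely, fix a state $s=(s_1,\ldots,s_q)\in\N^q$ with $s_1+\cdots+s_q=k\le n$, where $s_j$ is the number of placed items of type $v_j$; the subproblem asks for the maximum-value feasible ranking of these $k$ items using exactly $s_j$ items of each type. I would first argue that in any optimal solution, among items sharing a fixed type, the ones used are a prefix in the quality order and they appear in increasing index order; this is precisely the observation stated in the overview and follows from monotonicity and the Monge condition in~\eqref{eq:monge} via a standard exchange argument (swapping an inverted pair cannot decrease the value). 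This pins down exactly which items realize a given profile.

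Next I would set up the recurrence. The feasibility of a state $s$ is local: the partial type-sum vector $(s_1\cdot[\,\ell\in v_1\,]+\cdots)$ aggregated across types must lie in the rectangle $[L_{k,1},U_{k,1}]\times\cdots\times[L_{k,p},U_{k,p}]$, which is a check that depends only on $s$ and $k=|s|_1$. To fill position $k$ we choose which type $v_j$ receives its $s_j$-th item; that item is determined (by the prefix observation it is the $s_j$-th best item of type $v_j$), contributing a known weight at position $k$. Thus the value $\mathrm{OPT}(s)$ satisfies
\begin{equation*}
\mathrm{OPT}(s)=\max_{j:\,s_j>0}\ \Bigl(\mathrm{OPT}(s-e_j)+W_{\,a(j,s_j),\,k}\Bigr),
\end{equation*}
where $e_j$ is the $j$-th unit vector, $k=|s|_1$, and $a(j,s_j)$ is the $s_j$-th highest-quality item of type $v_j$; the maximum ranges only over transitions whose resulting state $s$ is feasible. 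The base case is $\mathrm{OPT}(\mathbf 0)=0$, and the answer is $\max_{|s|_1=n}\mathrm{OPT}(s)$. Correctness follows by induction on $k$: any optimal ranking of $k$ items, restricted to its top $k-1$, is an optimal ranking for the profile $s-e_j$ where $v_j$ is the type of the $k$-th item, using the exchange argument to guarantee the realizing items are exactly $a(j,\cdot)$.

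For the running time I would account for the three costs separately. Computing the types $T_i$ and grouping/sorting items within each type costs $O(mp)$ (each item's membership list has total size bounded by summing $|T_i|$, which is $O(mp)$), and this also produces the sorted item lists needed to define $a(j,\cdot)$. The DP table has at most $\binom{n+q}{q}=O(n^q)$ reachable states; each state performs $O(q)$ transitions, and each transition requires an $O(1)$ or at worst $O(p)$-amortized feasibility/weight lookup if we maintain the aggregated type-sum vector incrementally. Storing with each state its partial-sum vector and updating it in $O(1)$ per coordinate touched gives $O(q)$ work per state beyond a one-time $O(pqn^q)$ charge for maintaining the $p$-dimensional rectangle checks, yielding the claimed $O(mp+pqn^q)$ bound.

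\textbf{Main obstacle.} The delicate point is not the recurrence but justifying that restricting attention to \emph{profiles} loses nothing—i.e., that the within-type prefix/order structure is forced in \emph{some} optimal solution. The exchange argument must handle the full Monge inequality $W_{i_1j_1}+W_{i_2j_2}\ge W_{i_1j_2}+W_{i_2j_1}$ (for $i_1<i_2$, $j_1<j_2$) together with feasibility: one must check that swapping an inverted same-type pair preserves feasibility, which holds because the two items have identical type vectors and hence swapping them leaves every partial type-sum unchanged. Getting the feasibility-preservation and the weight-improvement to hold simultaneously, and arguing they compose into a globally prefix-ordered optimum, is where the care is needed; the rest is bookkeeping.
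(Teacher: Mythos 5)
Your proposal is correct and follows essentially the same route as the paper's proof: the same profile-based dynamic program over states $(s_1,\ldots,s_q)$, the same Monge/exchange argument establishing that each type contributes a prefix of its best items in order (with feasibility preserved because same-type items have identical property vectors), and the same $O(mp+pqn^q)$ accounting. The point you flag as the main obstacle is exactly the observation the paper proves first, and your treatment of it is sound.
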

\begin{proof} 
It is convenient to assume that the matrix $w$ satisfies a strict variant of property~\eqref{eq:monge} in which all the inequalities are strict. 
The general case follows by an analogous argument.
For an item $i\in [m]$, recall that $t_i=1_{T_i} \in \{0,1\}^p$ is the vector indicating which sets $P_\ell$ item $i$ belongs to. 
Further, let $Q=\{t_i: i\in [m]\}$ be the set of all realized types. 
Denote the elements of $Q$ by $v_1, v_2, \ldots, v_q\in \{0,1\}^d$. 
For every $\ell \in [q]$ define $Q_\ell = \{i \in [m]: t_i = v_\ell\}$ and let $q_\ell = |Q_\ell|$.  

For every $\ell \in [q]$ we denote by $i^{(\ell)}_1, i^{(\ell)}_2, \ldots, i^{(\ell)}_{q_\ell}$ the list of items in $Q_\ell$ in increasing order. 
Note that if in an optimal solution to the ranking maximization problem, exactly $s_\ell$ items come from $Q_\ell$, then these items are exactly $i^{(\ell)}_{1}, \ldots, i^{(\ell)}_{s_\ell}$ and they appear in increasing order in the solution. 
This follows from property \eqref{eq:monge} of $w$ as follows: 
Suppose that an item $i_1\in Q_\ell$ is placed at position $j_2$ and an item $i_2 \in Q_\ell$ is placed at position $j_1$, with $i_1<i_2$ and $j_1<j_2$. 
Swapping these two items in the ranking does not affect feasibility of the solution and the difference in value is
\[W_{i_1j_1}+W_{i_2j_2}-W_{i_1j_2}-W_{i_2j_1}.\]
This is positive due to the (strict) property \eqref{eq:monge}. 
Hence the swap can only increase the weight of the solution. 
A similar reasoning shows that it is beneficial to swap a ranked item $i_2$ with an unranked item $i_1$, whenever $i_1<i_2$.

One of the consequences of the above observations is that we can assume that $q_\ell \leq n$ for all $\ell \in [q]$ and hence $m \leq nq$. 
This is because we can keep at most $n$ best items from every set $Q_\ell$ and discard the remaining ones as they will not be part of any optimal solution. 
Such discarding can be done in time roughly $O(m)$ if an instance with $q_\ell > n$ is given. 

The above allows us to reduce the number of candidate rankings which one has to check to roughly $q^n$. However, this number is still prohibitively large.
As $q \ll n$ in many scenarios of interest, we construct a dynamic programming algorithm with a much fewer states $O(n^q)$. 

Now, consider the following sub-problem: 
For any tuple $(s_1, s_2, \ldots, s_q)\in \N^q$ with $k=\sum_{\ell=1}^q s_\ell \leq n$ let $D[s_1, s_2, \ldots, s_q]$  be the largest weight of a feasible ranking with top-$k$ positions occupied, such that exactly $s_\ell$ items are picked from $Q_\ell$ for every $\ell \in [q]$. 
Let us now describe an algorithm for computing $D[s_1, s_2, \ldots, s_q]$.
First, initialize all entries $D[s_1, s_2, \ldots, s_q]$ to $-\infty$ and set $D[0,0,\ldots, 0]=0$. 
Next, consider all valid tuples $(s_1, s_2, \ldots, s_q)$ in order of increasing values of $k=\sum_{\ell=1}^q s_\ell$, i.e., $k=1,2, \ldots, n$. 
Suppose that we would like to compute $D[s_1, s_2, \ldots, s_q]$. 
First one must check whether the fairness constraint at position $k$ is satisfied; for this we calculate
\[v=\sum_{\ell=1}^q s_\ell v_\ell.\]
Note that the $\ell$th coordinate of  $v$ represents the number of items having property $\ell \in [p]$. 
Hence, a necessary condition for the tuple $(s_1, s_2, \ldots, s_q)$ to represent a feasible ranking is that $L_k \leq v \leq U_k$. 
If that is not satisfied we just set $D[s_1, s_2, \ldots, s_q]=-\infty$. 
Otherwise, consider all possibilities for the type of item that is placed at the last position: $k$. 
Suppose it is of type $\ell$ (i.e., it belongs to $Q_\ell$). 
Then we have
\[D[s_1, s_2, \ldots, s_q] = D[s_1, \ldots,s_{\ell-1}, s_\ell -1, s_{\ell+1}, \ldots,  s_q] + W_{ik}\]
where $i=i^{(\ell)}_{s_\ell}.$
Hence, in order to compute $D[s_1, s_2, \ldots, s_q]$ we simply iterate over all possible types $\ell \in [q]$ and find the maximum value we can get from the above. 
Correctness follows from the fact that the $s_\ell$th item of type $Q_\ell$ in every optimal ranking is always $i^{(\ell)}_{s_\ell}$. 

The total number of sub-problems is at most $O(n^q)$. 
Hence, the above algorithm can be implemented in time  $O(m \cdot p+p\cdot q \cdot n^q)$, where $m\cdot p$ time is required to read the input and construct the list of elements of every given type. The second term $O(p \cdot q \cdot n^q)$ appears because there are $n^q$ subproblems, each such sub-problem considers $q$ cases, and every case has a feasibility check that takes $O(p)$ time.
\end{proof}

\section{Algorithms for $\Delta=1$ and Upper-Bound Constraints}\label{sec:lp}

\subsection{Integrality of LP solutions}
\label{sec:lpint}

\begin{theorem}\label{thm:lp1_formal}
Consider the LP relaxation~\eqref{eq:lp} for the constrained ranking maximization problem (U) when the properties $P_1, \ldots, P_p$ are pairwise disjoint (i.e., $\Delta=1$). 
If $W\in \R^{m \times n}_{\geq 0}$  satisfies \eqref{eq:monge}, then there exists an optimal integral solution $x^\star \in \{0,1\}^{m \times n}$ to~\eqref{eq:lp}.
\end{theorem}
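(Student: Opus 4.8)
The plan is to prove that every optimal solution of~\eqref{eq:lp} is integral; since the objective attains its maximum over the compact feasible polytope, this produces an optimal integral $x^\star$. I would begin with two normalizations. To reduce to the square case $m=n$, append $m-n$ dummy positions $n+1,\dots,m$ carrying zero value ($W_{ij}=0$ for $j>n$) and no fairness constraints. Because $W\ge 0$, the augmented matrix still satisfies monotonicity and \eqref{eq:monge}, every feasible $x$ extends to a \emph{doubly stochastic} assignment (all row and column sums equal to $1$), and restricting an integral optimum to the first $n$ columns solves the original instance. To reduce to a \emph{strict} form of \eqref{eq:monge}, perturb $W$ to $W+\delta A$ with $A_{ij}=(M-i)(M'-j)$, $M>m$, $M'>n$, which is strictly decreasing in $i$ and in $j$ and strictly supermodular, so $W+\delta A$ satisfies \eqref{eq:monge} strictly for every $\delta>0$; proving the claim there and letting $\delta\downarrow 0$ along a subsequence for which one fixed integral solution stays optimal (possible, as there are finitely many integral points) transfers integrality to $W$ by continuity.

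It then suffices to show, under strict \eqref{eq:monge} and $m=n$, that every fractional feasible $x\in\Omega_{m,m}$ admits a feasible value-increasing direction, so no fractional point can be optimal. All such directions are supported on a combinatorial rectangle: given rows $r_1<r_2$ and columns $c_1<c_2$, let $y$ have $y_{r_1c_1}=y_{r_2c_2}=+1$ and $y_{r_1c_2}=y_{r_2c_1}=-1$. This $y$ preserves every row and column sum exactly, and strict \eqref{eq:monge} gives $\inangle{W,y}=W_{r_1c_1}+W_{r_2c_2}-W_{r_1c_2}-W_{r_2c_1}>0$. A short case check on the rectangles $R_k=P_\ell\times[k]$ shows that the only fairness values $\inangle{1_{R_k},x}$ that change along $y$ are those of property $\ell(r_1)$ at positions $c_1\le k<c_2$, where the value rises by $\eps$; in particular, if $r_1$ and $r_2$ share a property then $y$ is orthogonal to all constraints and $x+\eps y$ stays feasible for small $\eps>0$ (the box constraints hold since the decreased corners are positive and the increased corners lie below $1$, as forced by the unit row/column sums).

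With this tool I would first exclude the inversion pattern \eqref{eq:conf}: if an optimal $x$ had $i_1<i_2$ in a common property with $x_{i_1j_2}>0$ and $x_{i_2j_1}>0$ for $j_1<j_2$, the same-property rectangle direction (orthogonal, hence always feasible) would strictly increase the value, a contradiction. Thus in any optimal solution the support restricted to each $P_\ell$ is monotone: a better item never carries mass below a position occupied by a worse item of the same property. For integrality, take an optimal—hence inversion-free—fractional $x$ and pick the topmost-leftmost fractional entry $(r_1,c_1)$ ($c_1$ minimal among columns with a fractional entry, $r_1$ minimal in that column). Since all earlier columns are integral, rows $r_1,r_2$ keep their remaining mass at columns $\ge c_1$, so row $r_1$ has a second fractional entry at some $c_2>c_1$ and column $c_1$ has a second fractional entry at some $r_2>r_1$; this yields a correctly oriented rectangle $(r_1<r_2,\ c_1<c_2)$ with all four corners movable. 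Feasibility of $x+\eps y$ then reduces to the slackness of the property-$\ell(r_1)$ upper bounds on $[c_1,c_2)$, which contradicts optimality once established, forcing $x$ to be integral.

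The step I expect to be the main obstacle is exactly producing, for an inversion-free fractional optimum, a single rectangle that is simultaneously correctly oriented and feasible. There is a genuine tension: taking $(r_1,c_1)$ as high and as far left as possible makes the orientation automatic but risks that some upper bound of $\ell(r_1)$ is tight on $[c_1,c_2)$, whereas taking a bottom-most fractional entry makes those constraints slack but tends to misorient the rectangle, so that the move \emph{decreases} the objective. Reconciling the two—choosing the remaining corners so that $\inangle{W,y}>0$ \emph{and} property $\ell(r_1)$ stays slack between the two columns—is where the monotone support structure forced by the first part must be exploited, and is the delicate heart of the argument. By contrast, the two reductions, the value computation, and the per-constraint bookkeeping for the rectangle directions are all routine.
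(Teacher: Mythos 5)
Your setup is sound and matches the paper's strategy (reduce to $m=n$, perturb to strict \eqref{eq:monge}, kill inversions via constraint-orthogonal rectangle directions, then attack a carefully chosen fractional entry), but the proof is not complete: you explicitly leave unproven the one step that everything hinges on, namely that the property-$\ell(r_1)$ upper bounds are slack at every position $k\in[c_1,c_2)$. You even flag this as "the delicate heart of the argument" that you do not know how to resolve. As written, this is a genuine gap, not a routine verification: without slackness, the direction $y$ may be blocked by a tight constraint and no contradiction with optimality is obtained.

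The missing idea is an integrality/parity argument, and it closes the gap using exactly the structure you already built. Take $c_2$ to be the \emph{smallest} column $>c_1$ with $x_{r_1c_2}>0$ (you must insist on minimality; an arbitrary fractional $c_2$ breaks the argument below, since two fractional entries of row $r_1$ could sum to an integer). Now fix $\ell=\ell(r_1)$ and any $k$ with $c_1\le k<c_2$, and examine the constraint value $\inangle{1_{P_\ell\times[k]},x}$ term by term: (i) all columns $j<c_1$ are integral by your choice of $c_1$; (ii) for rows $i\in P_\ell$ with $i<r_1$, inversion-freeness against $x_{r_1c_1}>0$ forces $x_{ij}=0$ for $j>c_1$, and $x_{ic_1}=0$ since column $c_1$ sums to $1$, contains the fractional $x_{r_1c_1}$, and has no fractional entry above row $r_1$; (iii) for rows $i\in P_\ell$ with $i>r_1$, inversion-freeness against $x_{r_1c_2}>0$ forces $x_{ij}=0$ for all $j<c_2$; (iv) in row $r_1$ itself, entries in columns strictly between $c_1$ and $c_2$ vanish by minimality of $c_2$. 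Hence the constraint value equals an integer plus $x_{r_1c_1}\in(0,1)$, so it is \emph{not} an integer; since $U_{k\ell}\in\mathbb Z$ and the constraint holds at $x$, it must hold strictly. Thus $x+\eps y$ is feasible for small $\eps>0$ and strictly better by strict \eqref{eq:monge}, the desired contradiction. This is precisely how the paper finishes (it uses the row-first tie-breaking for $(i_0,j_0)$, but the argument is identical with your column-first choice), and the "tension" you describe between orientation and slackness is illusory: the topmost-leftmost choice gives both at once, because tightness of an integer-valued bound is incompatible with the forced non-integrality of the partial sums.
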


\begin{proof}
Without loss of generality, we can assume that $n=m$ via a simple extension of the problem as follows: 
Extend the matrix $w\in \R^{m \times n}$ to a square matrix $\tilde{w} \in \R^{m \times m}$ by setting 
\[ \tilde{w}_{ij} =
\begin{cases} 
W_{ij} & \mbox{ for $i \in [m]$ and $j\in [n]$}\\
0 & \mbox{ for $i\in [m]$ and $j\in \{n+1, \ldots, m\}$}.
\end{cases}\]
Further, for every $\ell\in [p]$ and $k\in \{n+1,\ldots,m\}$ we set $U_{k\ell}=k$; i.e.,  no constraint is imposed on these positions. 
Note that $\tilde{w}$ still satisfies property \eqref{eq:monge}. 
Moreover, every solution $x\in \Omega_{m,n}$ to the original problem~\eqref{eq:lp} can be extended to a solution $\tilde{x} \in \Omega_{m,m}$ while preserving the weight; i.e., $\inangle{w,x}= \inangle{\tilde{w}, \tilde{x}}$ (where $\inangle{\cdot, \cdot}$ denotes the inner product between two matrices, i.e., $\inangle{w,x} = \sum_{i=1}^m \sum_{j=1}^m W_{ij}x_{ij}$). 
Similarly, every solution $\tilde{x}$ to the extended problem, when restricted to first $n$ columns, yields a solution to the original problem with the same weight. 
Thus, it suffices to prove that the extended problem has an optimal integral solution, and for the remainder of this section we assume that $n=m$.

For simplicity, assume that the matrix $w$ satisfies the {\it strict} variant of    property \eqref{eq:monge}; i.e., when the inequalities in~\eqref{eq:monge} are strict.
This can be achieved by a small perturbation of the weights without changing the optimal ranking.

Our proof consist of two phases. In the first phase, we show that every optimal solution satisfies a certain property on its support. 
In the second phase we show that no optimal solution that has this property can have  fractional entries.  
Let us state the property of a feasible solution $x\in \Omega_{m,m}$ that we would like to establish: 
\begin{equation}\label{eq:cond}
\begin{aligned}
&\forall_{\ell \in [p]} ~~\forall_{i_1, i_2\in P_\ell}~~ \forall_ {j_1, j_2\in [m]} ~~~~\\
(i_1<i_2 &\wedge j_1<j_2) ~~\Rightarrow~~  x_{i_2j_1}\cdot x_{i_1j_2}=0.
\end{aligned}
\end{equation}
In other words, whenever we have two items $i_1, i_2$ that have the same property $\ell$,  if $i_1$ is before $i_2$ (i.e., $i_1$ is better than $i_2$) then for any position $j_1, j_2 \in [m]$ such that $j_1$ is above $j_2$, then $x_{i_2 j_1}$ and $x_{i_1 j_2}$ cannot both be positive.  
We show that if $x$ does not satisfy condition \eqref{eq:cond} then it is not optimal. 

To this end, take a fractional solution $x$ for which there is some $i_1, i_2 \in P_\ell$ and $j_1, j_2\in [m]$ for which the condition does not hold. 
Now, consider a solution of the form $x^\prime=x+ y$ where $y = \eps (e^{(i_1 ,j_1)}+e^{(i_2 ,j_2)} - e^{(i_1 ,j_2)} - e^{(i_2 ,j_1)})$ for some $\eps>0$ and $e^{(i,j)}\in \R^{m\times m}$ denotes the matrix with a single non-zero entry at $(i,j)$ of value 1.
  Since $x_{i_1j_2}>0$ and $x_{i_2j_1}>0$, we can find some $\eps>0$ such that $x^\prime\geq 0$. 
 Furthermore, we claim that such a solution $x^\prime$ is still feasible for~\eqref{eq:lp}.
  Indeed, for every item $i\in [m]$ we have $\sum_{j=1}^n y_{ij}=0$ we can conclude that 
\[\sum_{j=1}^m x^\prime_{ij} = \sum_{j=1}^m x_{ij}\leq 1.\]
Similarly, for every rank position $j\in [m]$, we have $\sum_{i=1}^m x^\prime_{i,j}=\sum_{i=1}^m x_{i,j}=1$.
Hence, $x^\prime\in \Omega_{m,m}$. 

It remains to show that $x^\prime$ satisfies all of the fairness constraints. 
Note that it is enough to consider fairness constraints coming from the property $\ell \in [p]$, as $i_1, i_2\in P_\ell$,  $i_1, i_2\not\in P_{\ell^\prime}$ for any $\ell^\prime \neq \ell$, and variables $x_{i_1 j_1}, x_{i_1 j_2}, x_{i_2 j_1}, x_{i_2j_2}$ do not appear in other constraints. 
Every such constraint is of the form $\inangle{1_{R_k}, x} \leq U_{k\ell}$ where $1_{R_k}$ is the indicator vector (matrix) of the rectangle (i.e., submatrix) $P_\ell \times [k]$. 
Since $\inangle{1_{R_k}, y}=0$ for every such rectangle, we have
$$\inangle{1_{R_k}, x^\prime} = \inangle{1_{R_k}, x+\eps y} =\inangle{1_{R_k},x} \leq U_{k\ell}.$$
Therefore $x^\prime$ is feasible for~\eqref{eq:lp}. 
Furthermore, because of the (strict) property \eqref{eq:monge}, we have:
\[\inangle{w,x^\prime} = \inangle{w,x} + \eps(W_{i_1j_1}+W_{i_2j_2}-W_{i_1j_2}-W_{i_2j_1})> \inangle{w,x}.\]
As this is a feasible solution with a strictly better objective value, we conclude that $x$ was not optimal. 
Hence, every optimal solution necessarily satisfies~\eqref{eq:lp}.

Suppose now, for sake of contradiction, that $x$ satisfies~\eqref{eq:lp} and $x$ is not integral.
Consider a fractional entry $x_{i_0j_0}\in (0,1)$ of $x$ with $i_0$ as small as possible, and (in case of a tie) $j_0$ as small as possible. 
Suppose that the item $i_0$ belongs to $P_\ell$ for some $\ell \in [p]$. 
Note that there exists an entry $(i_0,j_1)$ with $j_1>j_0$ such that $x_{i_0j_1}>0$. 
This is due to the fact that 
\[\sum_{j=1}^m x_{i_0j}=1~~~\mbox{ and }~~~~\sum_{j=1}^{j_0} x_{i_0 j}=x_{i_0j_0}<1.\]
Fix the smallest possible $j_1$ with this property. 
Because of the constraint $\sum_{i=1}^m x_{i j_0}=1$, there exists at least one more fractional entry in the $j_0$th column, let us call it $x_{i_1 j_0}$. 
It follows that $i_1>i_0$. 
Note also that $i_1 \notin P_\ell$, as if $i_1 \in P_\ell$ then condition~\eqref{eq:cond} would be violated. 

Let us again consider a new candidate solution using the indices defined above: 
\[x^\prime:=x+\eps(e^{(i_1 ,j_1)}+e^{(i_2 ,j_2)} - e^{(i_1 ,j_2)} - e^{(i_2 ,j_1)}).\]
We show that $x^\prime$ is feasible for some $\eps>0$, which then contradicts the fact that $x$ is optimal because of the strict version of property \eqref{eq:monge}. 
To do this, it suffices to ensure that $x^\prime$ does not violate any fairness constraints imposed by the property $\ell \in [p]$. 
Note that for $k \notin \{ j_0, j_0+1,\ldots, j_1\}$ the constraints 
\[\sum_{i\in P_\ell} \sum_{j=1}^k x^\prime_{ij} =\sum_{i\in P_\ell} \sum_{j=1}^k x_{ij} \leq U_{k\ell}\]
remain satisfied.
Hence, it only remains to show that no constraint $\sum_{i\in P_\ell} \sum_{j=1}^k x_{ij} \leq U_{k\ell}$ is tight at $x$ for $k\in \{ j_0, j_0+1,\ldots, j_1\}$.

Observe that because of our choice of $(i_0, j_0)$, all entries in the rectangle $P_\ell \times [j_0-1]$ are integral. 
Furthermore, in the rectangle $P_\ell \times \{j_0, j_0+1, \ldots, j_1-1\}$, the only non-zero entry is $x_{i_0,j_0}\in (0,1)$ due to the fact that $x_{i_0,j_1}>0$ and condition~\eqref{eq:cond} is satisfied. 
Now, because $U_{k\ell} \in \mathbb Z$ but for $k$ as above $\sum_{i\in P_\ell} \sum_{j=1}^k x_{ij} \not\in \mathbb Z$, the constraint cannot be tight.
Thus, $x^\prime$ is feasible for some $\eps>0$ and hence $x$ is not an optimal solution to~\eqref{eq:lp}. 
Hence, no optimal solution has fractional entries. 
\end{proof}

In contrast to the above theorem, some vertices of the feasible region might be non-integral.
\begin{fact}\label{non_integral}
There exists an instance of the ranking maximization problem for $\Delta=1$, such that the feasible region of~\eqref{eq:lp} has fractional vertices.
\end{fact}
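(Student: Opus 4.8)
The plan is to exhibit a small explicit instance of the constrained ranking maximization problem (U) with $\Delta=1$ whose LP feasible region $\Omega_{m,n}$ intersected with the upper-bound constraints has at least one fractional vertex. Recall that Theorem~\ref{thm:lp1_formal} only guarantees that an \emph{optimal} solution for a Monge objective is integral; it does not claim that all vertices are integral. So the two statements are perfectly consistent, and the task here is simply to display a polytope vertex with non-integral coordinates. I would look for the smallest dimension that works, trying $m=n=2$ or $m=n=3$ with a single property so that $p=1$ and $\Delta=1$.

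First I would write down the constraints explicitly. With $\Omega_{m,n}$ we already have the assignment-polytope equalities $\sum_j x_{ij}\le 1$ and $\sum_i x_{ij}=1$, and then the fairness constraints $\sum_{i\in P_\ell}\sum_{j=1}^k x_{ij}\le U_{k\ell}$. The idea is to choose the $U_{k\ell}$ so that a fairness constraint cuts the assignment polytope through the interior of an edge, creating a new fractional vertex. For instance, take $m=n=2$, a single property $P_1=\{1\}$ (so only item $1$ carries the property, which keeps $\Delta=1$), and impose $U_{1,1}=\tfrac12$-type behavior; since the $U$'s must be integers in the model, I would instead use $m=n=3$ with a property containing two items and a bound that forces a half-integral split. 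The concrete task is to pick the indicator rectangle $P_\ell\times[k]$ so that the hyperplane $\langle 1_{R_k},x\rangle=U_{k\ell}$ is not parallel to any facet of the assignment polytope and passes through a point where exactly $mn$ linearly independent constraints are tight with a fractional solution.

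The key steps, in order, are: (1) fix the parameters $m,n,p,\{P_\ell\},\{U_{k\ell}\}$; (2) write out a candidate point $x^\star$ with at least one coordinate in $(0,1)$; (3) verify $x^\star$ is feasible, i.e., it lies in $\Omega_{m,n}$ and satisfies every fairness constraint; and (4) verify that $x^\star$ is a vertex, which amounts to checking that the set of constraints tight at $x^\star$ has rank $mn$ (equivalently, that no nonzero direction $y$ keeps all tight constraints tight). I would present the verification by exhibiting the tight constraints and computing the rank, or equivalently by showing that any feasible perturbation $x^\star\pm\varepsilon y$ violates feasibility for both signs.

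The main obstacle will be step (4): it is easy to write a feasible fractional point, but making it a genuine \emph{vertex} requires that enough constraints be simultaneously tight to pin it down uniquely. In the pure assignment polytope every vertex is a permutation matrix, so the fractional vertex must be created precisely by having a fairness inequality tight and interacting with the tight matching constraints to cut off an integral vertex and replace it with a fractional one. I expect the cleanest construction to have $n=3$, a property on two items, and a single binding upper bound, yielding a half-integral vertex; the bulk of the argument is the rank computation confirming no improving direction exists in either orientation. Since this is merely an existence claim witnessed by one instance, once a valid example is found the verification is finite and routine.
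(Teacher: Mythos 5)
Your high-level plan --- pick an explicit instance, write down a fractional feasible point, and certify vertexhood by a rank computation on the tight constraints --- is exactly how the paper proves this fact, and your remark that this is consistent with Theorem~\ref{thm:lp1_formal} is correct. The genuine gap is in the parameters you commit to: for $m=n=2$ and $m=n=3$ the feasible region of~\eqref{eq:lp} has \emph{no} fractional vertices at all, for any choice of disjoint properties and integral upper bounds, so your search would never terminate at those sizes and step (4) of your plan must always fail there. Here is the obstruction for the half-integral points you propose. When $m=n$, every row and column sum is forced tight, so each row and column of a half-integral point carries either a single $1$ or exactly two entries equal to $\nfrac12$; hence the fractional support is a disjoint union of cycles (a $4$-cycle or a $6$-cycle when $n=3$), and the feasible directions at the point are spanned by the alternating $\pm1$ directions around these cycles. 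For the point to be a vertex, some tight fairness constraint $\langle 1_{P_\ell\times[k]},x\rangle=U_{k\ell}\in\mathbb{Z}$ must have nonzero inner product with each cycle direction. But tightness at a half-integral point forces the rectangle $P_\ell\times[k]$ to contain an \emph{even} number of fractional entries, and two fractional entries sharing a row or a column carry opposite signs in a cycle direction; a short check of the possible $4$- and $6$-cycle supports inside a $3\times3$ (or $2\times2$) matrix shows that every rectangle catching an even number of fractional entries catches equally many $+$ and $-$ entries. So every tight fairness constraint is orthogonal to every cycle direction, the cycle directions remain feasible in both orientations, and the point is not a vertex. (An exhaustive check of supports also rules out non-half-integral vertices at these sizes.)

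This parity obstruction is precisely why the paper's witness lives at $n=m=4$: its half-integral point has an $8$-cycle support arranged so that the two rows of $P_1=\{1,2\}$ place their $+$ entries in columns $1,2$ and their $-$ entries in columns $3,4$, with the cycle closed up through rows $3,4\notin P_1$; the rectangle $\{1,2\}\times[2]$ then catches exactly the two $+$ entries, has integer value $1$ (so $U_{21}=1$ makes it tight), and kills the unique cycle direction, giving $8+7+1=16$ independent tight constraints. Needing two distinct columns $\le k$ for the $+$ entries and two distinct columns $>k$ for the $-$ entries is what forces $n\ge4$ in the square case. (The other escape is to take $m>n$, so that slack rows create path directions rather than cycle directions; e.g.\ $m=4$, $n=3$ with $P_1=\{1,3\}$, $U_{21}=1$ and the half-integral point supported on the path $(1,1),(2,1),(2,3),(3,3),(3,2),(4,2)$ is also a fractional vertex.) Since the entire content of this fact is the witness, and the witness space you propose provably contains none, the proposal as written does not yield a proof; the missing ingredient is the structural reason the example must be at least this large, together with an instance realizing it.
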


\begin{proof}Let $n=m=4$ and suppose there is only one property $P_1=\{1,2\}$ and the constraints are $U_{21}=1$ and $U_{k1}=\infty$ for $k \neq 2$. In other words, we only constrain the ranking to have at most $1$ element of property $1$ in the top-2 entries. 

Consider the following point.
$$x=
\begin{pmatrix}
\nfrac12 & 0 & 0 & \nfrac12 \\
0 & \nfrac12 & \nfrac12 & 0\\
\nfrac12 & 0 & \nfrac12 & 0 \\
0 & \nfrac12 & 0 & \nfrac12
\end{pmatrix}$$
Clearly $x$ is feasible. Observe that the support of $x$ has $2n$ elements and there are exactly that many linearly independent tight constraints at that point.  Indeed the doubly-stochastic constraints give us $2n$ constraints out of which $2n-1$ are linearly independent, and the remaining one is 
$$x_{1,1}+x_{1,2}+x_{2,1}+x_{2,2} = 1.$$
Therefore $x$ is a (non-integral) vertex of the feasible region of~\eqref{eq:lp}.
\end{proof}

\subsection{Fast greedy algorithm}\label{sec:greedy}

Due to the special structure for $\Delta = 1$, we are able to find a fast simple algorithm for the ranking maximization problem in this case. 
\begin{theorem}\label{thm:FastGreedy}
There exists an algorithm which, given an instance of the constrained ranking maximization problem (U) with $\Delta=1$ and objective function that satisfies property \eqref{eq:monge}, outputs an optimal ranking in $O(m+n \cdot p)$ time.
\end{theorem}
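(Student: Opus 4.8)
The plan is to analyze the natural greedy rule sketched in the proof overview: scan the positions $k=1,2,\ldots,n$ from top to bottom and, at each position, place the highest-value item that is still available and whose placement violates no upper-bound constraint. The first thing I would record is that, by property \eqref{eq:monge}, $W_{ik}$ is non-increasing in $i$ for every fixed $k$, so among any set of candidate items the one of highest value is simply the one of smallest index. Hence the rule never needs to inspect the numbers $W_{ik}$ — it only uses the index order of the items — and ``the highest-value available feasible item'' at position $k$ is the smallest-index item $i$ not yet placed whose unique property $\ell(i)$ (unique since $\Delta=1$) still satisfies $c_{\ell(i)}<U_{k\ell(i)}$, where $c_\ell$ counts the property-$\ell$ items already placed. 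I would implement this by keeping, for each property, a pointer to its best unused item (its items listed in increasing index order) together with the running count $c_\ell$.

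Before arguing correctness I would make three harmless normalizations. First, attach a dummy property with infinite bounds to each item that has no property, so that every item lies in exactly one of $p$ pairwise disjoint properties. Second, replace each $U_{k\ell}$ by $\min_{k'\ge k}U_{k'\ell}$; since the partial sums are monotone in $k$ this leaves the feasible region unchanged but makes $U_{k\ell}$ non-decreasing in $k$, which is exactly what lets the local test ``$c_\ell<U_{k\ell}$'' also certify that no later constraint is violated. Third, as in Theorem~\ref{thm:lp1_formal}, perturb $W$ to the strict form of \eqref{eq:monge}. I would also reuse the structural consequence of \eqref{eq:monge} (cf.\ the proofs of Theorems~\ref{thm:dp_formal} and \ref{thm:lp1_formal}): in any optimal ranking the used items of each property form a prefix of that property and appear in increasing index order down the ranking, since any violating pair can be swapped without affecting feasibility (both items share a property) and without decreasing the value.

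The core is an exchange/induction argument maintaining the invariant that some optimal ranking $O_k$ agrees with greedy on positions $1,\ldots,k$. Given $O_{k-1}$, I would first re-structure its tail (positions $\ge k$) with respect to $\ell:=\ell(g)$, where $g$ is greedy's choice at position $k$; this leaves the agreed prefix untouched, and afterwards either the item $o$ that $O_{k-1}$ places at $k$ equals $g$ (done) or $\ell(o)\ne\ell$. The key observation is that greedy's pointer equalling $g$ forces every property-$\ell$ item of index below $g$ to be placed in positions $1,\ldots,k-1$, which are common to $O_{k-1}$; combined with the prefix structure this means $O_{k-1}$ has \emph{no} property-$\ell$ items on positions $k,\ldots,k'-1$, where $k'$ is the position of $g$ in $O_{k-1}$ (and none at all if $g$ is unused). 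Therefore swapping $g$ with $o$ (when $g$ sits at $k'>k$) or replacing $o$ by $g$ (when $g$ is unused) raises the property-$\ell$ prefix count by exactly $1$ on a range of positions starting at $k$ and only lowers $\ell(o)$'s counts; by $c_\ell+1\le U_{k\ell}\le U_{m\ell}$ (monotonicity) feasibility is preserved, and by the strict form of \eqref{eq:monge} the value does not decrease. Thus $O_k$ exists, so for $k=n$ greedy's output is optimal; the same reasoning shows greedy never gets stuck on a feasible instance, since the item $o$ is itself a feasible candidate for greedy.

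The running time then falls out: attaching properties and bucketing items in index order costs $O(m)$, reading and monotonizing the $U$-matrix costs $O(np)$, and at each of the $n$ positions a single $O(p)$ scan over the properties finds the smallest-index feasible pointer and performs an $O(1)$ update, for $O(np)$ total; the matrix $W$ is never read. This gives $O(m+np)$. I expect the main obstacle to be the feasibility bookkeeping in the exchange step — precisely the claim that moving $g$ earlier inflates a single property's prefix counts, and only by one, over a range on which the monotonized bound already has slack. This is exactly where disjointness of the properties ($\Delta=1$) and the pointer invariant of the greedy are indispensable.
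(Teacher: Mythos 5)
Your proposal is correct and follows essentially the same route as the paper's proof: the smallest-index greedy, the prefix/increasing-order structure of optimal rankings within each property forced by \eqref{eq:monge}, an exchange (swap-or-replace) argument at the first position of disagreement using the disjointness of properties, and the same $O(m+np)$ bookkeeping with per-property pointers and counters. Your explicit normalizations (monotonizing $U_{k\ell}$ over $k$ and adding a dummy property for property-free items) are details the paper handles implicitly through its feasibility check, and they are sound.
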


\begin{proof}
For simplicity, assume that $w$ satisfies the strict variant of property \eqref{eq:monge} (with strict inequalities in the definition). 
This can be assumed without loss of generality by slightly perturbing $w$.
Consider the following greedy algorithm that iteratively constructs a ranking $\pi: [n] \to [m]$ (i.e., $\pi(j)$ is the item ranked at position $j$, for all $j\in [n]$).\footnote{This alternate notation makes the exposition in this section cleaner -- see also the notation and problem formulation in Section~\ref{sec:geometric}.}
\begin{itemize}
\item For $j=1$ to $n$
\begin{itemize}
\item Let $i\in [m]$ be the smallest index of an item which was not yet picked and can be added at position $j$ without violating any constraint. If there is no such $i$, output INFEASIBLE.
\item Set $\pi(j)=i$.
\end{itemize}
\item Output $\pi$.
\end{itemize}
\noindent It is clear that if the above algorithm outputs a ranking $\pi: [n] \to [m]$ then $\pi$ is feasible. Assume now that it indeed outputs a ranking. We will show that it is optimal.

Take any optimal ranking $\pi^\star$. 
Let $P_\ell$ be any property (for $\ell \in [p]$) and let  $i_1, i_2, \ldots, i_{s}$ be the list of items in $P_{\ell}$ in increasing order. 
We claim that if $\pi^\star$ ranks exactly $r$ items from $P_\ell$ then these have to be $i_1, i_2, \ldots, i_r$, in that order. 
For this, note that when swapping two elements, say $i_1, i_2 \in P_\ell$, at positions $j_2,j_1$ in the ranking (with say $j_1<j_2$) the change in weight is equal to
\[W_{i_1j_1}+W_{i_2j_2}-W_{i_1j_2}-W_{i_2j_1}>0\]
because of the (strict)  property \eqref{eq:monge}. 
Hence it is always beneficial to rank the items in $P_\ell$ in increasing order. 
Furthermore, it can be argued using monotonicity that it is always optimal to select the $r$ items with smallest indices for the ranking. 

One of the consequences of the above observations is that we can assume that $|P_\ell | \leq n$ for all $\ell \in [q]$ and hence $m \leq np$. 
This is because we can keep at most $n$ best items with property $\ell$, and discard the remaining ones as they will not be part of any optimal solution. 
Such a discarding can be done in time $O(m)$ time if an instance with $|P_\ell | > n$ is given. 

Further, the above observation allows us to now prove optimality of the greedy strategy. 
Take the largest number $k$ such that $\pi$ and $\pi^\star$ agree on $[k-1]$, i.e., $\pi(j) = \pi^\star(j)$ for $j=1,2 ,\ldots, k-1$. If $k-1=n$ then there is nothing to prove. Let us then assume that $k-1<n$ and $\pi(k)=i_A \neq i_O = \pi^\star(k)$. 
There are two cases: either $i_A$ is ranked in $\pi^\star$ or it is not. 

In the first case, let $k^\prime$ be the position in $\pi^\star$ such that $\pi^\star(k^\prime) = i_A$, clearly $k^\prime>k$. Let $\hat{\pi}$ be a ranking identical to $\pi^\star$ but with positions $k$ and $k^\prime$ swapped. We claim that $\hat{\pi}$ is still feasible and has larger weight than $\pi^\star$. The claim about weights follows easily from the strict  property \eqref{eq:monge}. Let us now reason about feasibility of $\hat{\pi}$. Let $\ell$ be the only property of $i_A$ (i.e. $i_A \in P_\ell$) and let $h$ be the total number of elements of property $\ell$ in top-$(k-1)$ positions of $\pi$ (or equivalently of $\pi^\star$). Note that by doing the swap we could have only violated some constraint corresponding to $\ell$. Since $\pi(k) = i_A$ we know that $U_{\ell k}\geq h+1$ (and similarly $U_{k+1,\ell}, U_{k+2,\ell}, \ldots, U_{n,\ell}\geq h+1$).  
Further, because of our previous observation, no item $i\in P_\ell$ is ranked at position $j$ for $k\leq j<k^\prime$ in $\pi^\star$. For this reason, the fairness constraints corresponding to $\ell$ at $j=k,k+1, \ldots, k^\prime$ are satisfied for $\hat{\pi}$ (there are $h+1$ elements of property $\ell$ in top-$k^\prime$ items in $\hat{\pi}$).  
The second case is very similar. One can reason that if $i_A$ is not included in the ranking $\pi^\star$ then by changing its $k$th position to $i_A$ we obtain a ranking which is still feasible but has larger value.

Hence, if $k-1<n$, this contradicts the optimality of $\pi^\star$. Thus, $\pi=\pi^\star$ and $\pi$ is the optimal ranking. 
By the same argument, one can show that if the instance is feasible, the greedy algorithm will never fail to output a solution (i.e., report infeasibility).

Let us now discuss briefly the running of such a greedy algorithm. For every property $\ell$ we can maintain an ordered list $L_\ell$ of elements of $P_\ell$ which were not yet picked to the solution and a count $C_\ell$ of items of property $\ell$ which are already part of the solution. Then for every ranking position $k\in [n]$ we just need to look at the first element of every list $L_\ell$ for $\ell \in [p]$ and one of them will be ``the best feasible item''. Having the counters $C_\ell$ we can check feasibility in $O(1)$ time and  we can also update our lists and counters in $O(1)$ per rank position. For this reason, every rank position is handled in $O(p)$ time.  Note also that at the beginning all the lists can be constructed in total $O(m)$ time, since we can go over the items in the reverse order and place every item at the beginning of a suitable list $L_\ell$ in $O(1)$ time. Hence, the total running time is $O(m+n\cdot p)$. 
\end{proof}

\section{A $(\Delta+2)$-Approximation Algorithm}\label{sec:mul} 
\begin{theorem}\label{thm:mul_formal}
There exists a linear time algorithm which given an instance of the constrained
 ranking maximization problem (U) satisfying condition~\eqref{eq:feas1}, outputs a ranking $x \in \{0,1\}^{m \times n}$ whose weight is at least $\frac{1}{ \Delta+2}$ times the optimal one and satisfies all fairness constraints up to a factor of $2$, i.e., 
 \[\sum_{i\in P_{\ell} }\sum_{j=1}^k x_{i,j} \leq 2U_{k\ell}, ~~\mbox{for all $\ell\in [p]$ and $k\in [n]$}.\]
\end{theorem}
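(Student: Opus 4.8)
The plan is to relax the requirement that all $n$ positions be occupied, approximate the relaxed problem with a single greedy pass, and then fill the remaining positions in a second pass that only increases the value at the cost of a factor-$2$ violation of the upper bounds. I would take the ground set to be $\mathcal{E}=[m]\times[n]$, where an element $(i,j)$ means ``place item $i$ at position $j$'', and let $\mathcal{I}$ be the family of all $S\subseteq\mathcal{E}$ whose characteristic matrix $x^{S}\in\{0,1\}^{m\times n}$ uses each item and each position at most once and satisfies $\sum_{i\in P_\ell}\sum_{j=1}^{k}x^{S}_{ij}\le U_{k\ell}$ for all $\ell\in[p],k\in[n]$. Dropping the ``fill every position'' requirement makes $(\mathcal{E},\mathcal{I})$ downward closed, and as noted in the overview it is the intersection of two partition matroids (rows and columns) and $p$ laminar matroids $\mathcal{I}_\ell$. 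The greedy algorithm scans the pairs $(i,j)$ in non-increasing order of $W_{ij}$ and adds $(i,j)$ to the current independent set whenever feasibility in $\mathcal{I}$ is preserved; by \cite{Jenkyns76} this already gives a $(p+2)$-approximation of the relaxed optimum.

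To replace $p+2$ by $\Delta+2$ I would prove that $(\mathcal{E},\mathcal{I})$ is a $(\Delta+2)$-extendible system in the sense of \cite{Mestre06}, for which greedy is a $(\Delta+2)$-approximation. Thus, given $C\subseteq D$ in $\mathcal{I}$ and $x=(i,j)$ with $C+x\in\mathcal{I}$, I must produce $Y\subseteq D\setminus C$ with $|Y|\le\Delta+2$ and $(D\setminus Y)+x\in\mathcal{I}$. The key observation is that $x$ can conflict with only $|T_i|+2\le\Delta+2$ of the matroids: the row of $i$ and the column of $j$ each contain at most one conflicting element of $D$ (and none of $C$, since $C+x\in\mathcal{I}$), while a property matroid $\mathcal{I}_\ell$ is touched only for $\ell\in T_i$. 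For each such $\ell$, I would let $y_\ell$ be the property-$\ell$ element of $D\setminus C$ in the smallest position. Since $C+x\in\mathcal{I}$ and $i\in P_\ell$, any constraint $\sum_{i'\in P_\ell}\sum_{j'=1}^{k}x^{D}_{i'j'}\le U_{k\ell}$ that is tight at $D$ with $k\ge j$ must contain a property-$\ell$ element of $D\setminus C$ in positions $[k]$, so the position of $y_\ell$ is at most $k$; deleting the single element $y_\ell$ therefore lowers the count in every tight property-$\ell$ constraint at once and restores $\mathcal{I}_\ell$-feasibility. Taking $Y$ to be the union of these at most $\Delta+2$ deletions yields extendibility. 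This laminar exchange is the step I expect to be the main obstacle, as it is precisely where the nested structure of the upper bounds must be used to ensure one removal per property is enough.

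Because any feasible full ranking is also a feasible partial solution, the relaxed optimum is at least $\OPT$, so the greedy set $G$ has weight at least $\nfrac{1}{(\Delta+2)}\OPT$ while satisfying every upper bound exactly. To turn $G$ into a full ranking I would process the empty positions $k$ in increasing order and place into $k$ any still-unused item $i$ with $T_i\subseteq S_k$, where $S_k=\inbraces{\ell\in[p]:U_{(k-1)\ell}+1\le U_{k\ell}}$. Condition~\eqref{eq:feas1} provides at least $n$ candidates for each $k$, and since fewer than $n$ items have been placed at any moment an unused candidate always remains, so the fill-in never stalls; as $W_{ij}\ge 0$ it can only increase the value.

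For the violation bound, a filled position $k'$ contributes to property $\ell$ only when $\ell\in S_{k'}$, and the number of $k'\le k$ with $\ell\in S_{k'}$ is at most $\sum_{k'=1}^{k}\inparen{U_{k'\ell}-U_{(k'-1)\ell}}=U_{k\ell}$ (with $U_{0\ell}:=0$); adding the at most $U_{k\ell}$ items contributed by $G$ gives the bound $2U_{k\ell}$, while the matching constraints remain satisfied exactly. Note that this whole argument uses only $W_{ij}\ge 0$, not property~\eqref{eq:monge}. Finally, maintaining per-property ordered lists and counters exactly as in the proof of Theorem~\ref{thm:FastGreedy}, so that each add and feasibility test costs $O(\Delta)$, yields the claimed running time.
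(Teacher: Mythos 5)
Your proposal is correct and is essentially the paper's own proof: the same relaxation to partial rankings, the same weight-ordered greedy, the same key exchange lemma (at most one conflict in the row and one in the column of the newly added cell, plus one removal per property in $T_i$ found via the laminar prefix structure, i.e., at most $\Delta+2$ removals), and the same fill-in phase driven by condition~\eqref{eq:feas1} yielding the factor-$2$ violation. The only differences are cosmetic: you package the exchange lemma as $(\Delta+2)$-extendibility and cite \cite{Mestre06} --- exactly the alternative analysis the paper itself points out --- where the paper instead telescopes $\OPT_{r-1}\leq \OPT_{r}+(\Delta+1)W_{i_rj_r}$ directly, and you bound the violation by counting fill-in positions with $\ell\in S_{k'}$ (telescoping to $U_{k\ell}$, which, like the paper's claim that \eqref{eq:feas1} guarantees $n$ safe items for each empty position, implicitly uses that $U_{k\ell}$ is non-decreasing in $k$), where the paper simply observes that the greedy part and the fill-in part each satisfy the upper bounds separately.
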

\begin{proof}
The algorithm can be divided into two phases: First we construct a partial ranking that may leave some positions empty, and then we refine it to yield a complete ranking. 

The first phase is finding a (close to optimal) solution to the following integer program: 
\begin{equation}
\begin{aligned}
	\max_{x\in \rOmega_{m,n}}  ~~ & \sum_{1=1}^m \sum_{j=1}^m  x_{ij} W_{ij} 
		~~~~ \\
		\mathrm{s.t.} ~~ &\sum_{i\in P_\ell}\sum_{j=1}^k x_{ij}\leq U_{k\ell}, &\mbox{ for all $\ell\in [p]$ and $k\in [n]$}
\end{aligned}
\label{eq:lp_re}	
\end{equation}
where $\rOmega_{m,n}$ is the set of all $x\in \{0,1\}^{m \times n}$ such that for all $i\in [m]$ and $j\in [n]$ 
\[ \sum_{j=1}^n x_{ij} \leq 1~~~~ \mbox{ and } ~~~~\sum_{i=1}^m x_{ij}\leq 1 .\]
Note that this is a relaxation of our problem as it does not require every position in the ranking to be filled. 
In the proof we will often refer to a pair $(i,j) \in [m] \times [n]$ as a {\it cell}, since we treat them as entries in a 2-dimensional array. 
Consider the following greedy algorithm which can be used to find a (likely non-optimal) integral solution $z\in \{0,1\}^{m \times n}$ to the problem~\eqref{eq:lp_re}.
\begin{itemize}
%
\item Order the cells $(i,j)$ according to non-increasing values of $W_{ij}$.
\item Set $z_{ij}=0$ for all $(i,j) \in [m]\times [n]$. 
\item Process the cells $(i,j)$ one by one:
\begin{itemize}
%
\item if adding $(i,j)$ to the solution does not cause any constraint violation, set $z_{ij}=1$,
\item otherwise, move to the next cell.
\end{itemize}
\end{itemize}
We claim that the solution $z\in \{0,1\}^{m\times n}$ given by this algorithm has value at least $\frac{1}{\Delta+2}$ times the optimal solution.
In other words, 
\begin{equation}\label{eq:phase1_opt}
\sum_{i=1}^m \sum_{j=1}^m  z_{ij} W_{ij} \geq (\Delta+2)^{-1} \OPT,
\end{equation}
where $\OPT$ denotes the optimal solution to~\eqref{eq:lp_re}.
To prove it, let us denote by 
$$S:=\{(i_1,j_1), (i_2, j_2), \ldots, (i_N, j_N)\}$$ the set of cells picked by the algorithm (in the same order as they were added to the solution).
Let also $S_r:=\{(i_1,j_1), (i_2, j_2), \ldots, (i_r, j_r)\}$ for $r=0,1, \ldots, N$.
For $r=0,1, \ldots, N$ denote by $\OPT_r$ the optimal solution to~\eqref{eq:lp_re} under the condition that the cells from $S_r$ are part of the solution, or in other words, under the additional constraint that $x_{ij}=1$ for all $(i,j)\in S_r$. We claim that for every $r=1, \ldots, N$ it holds that

\begin{equation}\label{eq:claim_greedy}
\OPT_{r-1}\leq \OPT_{r}+(\Delta+1)W_{i_rj_r}
\end{equation}
Note that given~\eqref{eq:claim_greedy}, the claim~\eqref{eq:phase1_opt} follows easily. 
Indeed, we have $\OPT=\OPT_0$ and
\begin{align*}
\OPT_0&\leq \OPT_1 + (\Delta+1)W_{i_1j_1} \\
&\leq \OPT_2 + (\Delta+1)(W_{i_1j_1}+W_{i_2j_2})\\
&\leq \ldots \\
&\leq \OPT_N + (\Delta+1)\sum_{r=1}^N W_{i_rj_r} \\
&=(\Delta+2)\sum_{r=1}^N W_{i_rj_r}
\end{align*}
The last equality follows from the fact that $S_N$ is a maximal solution (it cannot be extended), hence indeed $$\OPT_N =\sum_{i=1}^m \sum_{j=1}^n z_{ij}W_{ij}= \sum_{r=1}^N W_{i_rj_r}.$$
Thus it remains to prove~\eqref{eq:claim_greedy}.
For this, take any $r\in \{1,2, \ldots, N\}$ and consider the optimal solution $S'$ (a set of cells) which contains $S_{r-1}$, thus $\sum_{(i,j)\in S'} W_{ij}=\OPT_{r-1}$. 
If $(i_r,j_r)\in S'$ then we are done, as~\eqref{eq:claim_greedy} follows straightforwardly.
Otherwise consider $S'':=S' \cup  \{(i_r, j_r)\}$. We claim that there is a set of cells $R \subseteq S' \setminus S_{r-1}$ with $|R| \leq \Delta+2$ such that $S'' \setminus R$ is feasible for the relaxed problem~\eqref{eq:lp_re}. This then implies~\eqref{eq:claim_greedy}, as $S_r \subseteq S'' \setminus R$ and every cell $(i,j)\in R$ has weight $W_{ij}\leq W_{i_rj_r}$. 

To construct $R$ pick from $S'$ the cell lying in the same column as $(i_r, j_r)$ and the cell lying in the same row as $(i_r, j_r)$ (if they exist) and add to $R$. This is to make sure that in $S'' \setminus R$ there is at most one cell in every row and every column.
Now consider any property $\ell$, for which fairness constraints might have been violated in $S''$ (then necessarily $\ell \in T_{i_r}$).
Moreover, because of the (laminar) matroid structure of the constraints corresponding to a single property $\ell$, one can find a cell $(i,j)\in S' \setminus S_{r-1}$ such that $S'' \setminus \{(i,j)\}$ satisfies all fairness constraints for property $\ell$. Thus, by adding to $R$ one such cell (whenever it exists) for every property $\ell\in T_{i_r}$ we obtain a set $R$ of at most $\Delta+2$ cells such that $S'' \setminus R$ is feasible.
This concludes the proof that the above algorithm finds an  $(\Delta+2)$-approximation to the integer program~\eqref{eq:lp_re}.

Let us now show how to construct a full ranking out of it. 
Let 
\begin{align*}
I&:=\{i\in [m]:z_{ij}=0 \mbox{ for all }j\in [n]\}~~~~~~~~\\
J&:= \{j\in [n]:z_{ij}=0 \mbox{ for all }i\in [m]\}.
\end{align*}
We will construct a new solution $y\in \{0,1\}^{m \times n}$ such that $y$ is supported on $I \times J$,  it has at most one $1$ in every row and column, it has exactly one $1$ in every column $j\in J$ and satisfies all fairness constraints. 
Note that if we then define $x:=z+y$, then $x$ has the properties as claimed in the theorem statement. 
Indeed $x\in P_{m,n}$ and further for every $\ell \in [p]$ and $k\in [n]$
\begin{align*}
\sum_{i\in P_\ell}\sum_{ j\leq k} x_{ij} &\leq \sum_{i\in P_\ell }\sum_{ j\leq k} (z_{ij}+y_{ij }) \\
&\leq \sum_{i\in P_\ell }\sum_{ j\leq k} x_{ij}+\sum_{i\in P_\ell }\sum_{ j\leq k} y_{ij}\\
&\leq  U_{k\ell}+U_{k\ell}.
\end{align*}
The approximation guarantee follows because $y$ has a nonnegative contribution to the total weight and the guarantee on $x$ is with respect to a relaxed problem~\eqref{eq:lp_re} whose optimal value is an upper-bound on the optimal value of the ranking maximization problem. 
Hence it remains to find $y$ with properties as above.

We can construct $y$ using another greedy procedure. 
Let $y=0$. 
We process the elements of $J$ one at a time in increasing order. 
When considering a given $j$ pick any $i\in I$ such that adding $(i,j)$ to the solution $y$ does not introduce fairness constraint violation. 
It is clear that if the above algorithm succeeds to find a suitable $i\in I$ at every step, then it succeeds to construct a solution with the required properties. 
It remains to show that this is indeed the case. 
To this end, fix $j\in J$ and look at the step in which $j$ is considered. 
From condition~\eqref{eq:feas1}, we know that there are at least $n$ elements in $[m]$ that can be placed at position $j$ without violating any constraint. 
Out of these $n$ elements, some may have been already taken by the above algorithm (i.e., they are not in $I$) and some of them could have been added to the new solution $y$. 
However, as there were $n$ of them to begin with, at least one remains and can be selected.
This concludes the proof of correctness of the above procedure.
It remains to observe that both phases of the algorithm (after simple preprocessing) can be implemented in linear time, i.e., $O((m+n)\cdot p)$.
\end{proof}

\section{Hardness Results}\label{sec:hardness}
In this section we state and prove our hardness results regarding the constrained ranking feasibility and maximization problems for the (U) variant.  

Since some of the proofs involve the hypergraph matching problem, let us define this concept formally. 
A hypergraph is a pair $H=(V,E)$ composed of a vertex set $V$ and a hyperedge set $E\subseteq 2^{[m]}$. 
Given $\Delta\in \N$ we call $H$ a $\Delta$-hypergraph if every hyperedge $e\in E$ has cardinality at most $\Delta$. 
The hypergraph matching is the following decision problem: 
given a hypergraph $H$ and a number $n\in \N$, decide whether there exists a set of $n$ pairwise disjoint hyperedges in $H$ (such a set is called a matching). 
The optimization variant of this  problem asks for a largest cardinality matching.  

The name $\Delta$ above is intended to be suggestive -- recall that in the context of the constrained ranking feasibility (or maximization) problem, the parameter $\Delta$ is the maximum number of properties any given item has. 
The first hardness result states that even for small $\Delta$ the constrained ranking feasibility is hard.
\begin{theorem}\label{thm:np-hard}
The constrained ranking feasibility problem (U)  is $\NP$-hard for any $\Delta \geq 3$.
\end{theorem}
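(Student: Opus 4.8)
The plan is to reduce from the $\Delta$-dimensional (hyper)graph matching problem, which is $\NP$-hard for $\Delta \geq 3$ (this is exactly $3$-dimensional matching when $\Delta = 3$). Given a $\Delta$-hypergraph $H = (V, E)$ and a target number $n$, I want to build an instance of the constrained ranking feasibility problem (U) that is feasible if and only if $H$ contains a matching of size $n$. The natural encoding, already hinted at in the proof overview, is to let the \emph{hyperedges} $e \in E$ play the role of \emph{items} and the \emph{vertices} $v \in V$ play the role of \emph{properties}. Thus item $e$ has property $v$ precisely when $v \in e$, so the type of item $e$ is $T_e = e$, and since every hyperedge has cardinality at most $\Delta$, we get $\max_e |T_e| \leq \Delta$, matching the parameter exactly. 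We ask to rank $n$ of these items.

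The key step is to choose the upper-bound constraints so that feasibility forces the chosen items (hyperedges) to be pairwise disjoint. For each property (vertex) $v \in V$ and each position $k \in [n]$, I would set $U_{kv} = 1$. This says that among the top $k$ items, at most one has property $v$; taking $k = n$ this means that across the entire ranking, at most one selected hyperedge contains $v$, for every vertex $v$. Since a ranking in this problem must fill all $n$ positions with distinct items, a feasible ranking corresponds exactly to a choice of $n$ distinct hyperedges $e_1, \ldots, e_n$ such that no vertex lies in two of them — that is, a matching of size $n$ in $H$. Conversely, any size-$n$ matching can be placed in the $n$ positions in any order and will satisfy all the constraints $U_{kv} = 1$, since each vertex appears in at most one of the chosen hyperedges. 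The weights $W$ are irrelevant for a pure feasibility question, so I can take them to be arbitrary (e.g. all zero), and no Monge assumption is needed here.

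The two directions of the equivalence are then routine: a feasible ranking gives $n$ pairwise-disjoint hyperedges directly from the $U_{kv}=1$ constraints at $k=n$, and a matching of size $n$ gives a feasible ranking. The reduction is clearly polynomial-time: the number of items is $|E|$, the number of properties is $|V|$, the number of positions is $n$, and there are $n|V|$ upper-bound constraints, all computable in time polynomial in the size of $H$. Since $\Delta$-dimensional matching is $\NP$-hard for $\Delta \geq 3$, we conclude that deciding feasibility of the constrained ranking problem (U) is $\NP$-hard whenever $\Delta \geq 3$.

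The main thing to be careful about — rather than a genuine obstacle — is the convention that a ranking must select exactly $n$ \emph{distinct} items and fill every one of the $n$ positions. This is what lets me read off a matching of the right size; I should double-check that the problem definition indeed requires exactly $n$ filled positions with distinct items (as stated in the model), since otherwise a degenerate "empty" ranking could trivially satisfy the upper bounds and break the hardness. Given that constraint, the argument is clean, and the only modelling decision worth flagging is that we use only upper bounds (the (U) variant) with all bounds equal to $1$, so the hardness holds already for this restricted and natural family of constraints.
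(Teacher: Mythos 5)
Your reduction is correct and is essentially identical to the paper's own proof of Theorem~\ref{thm:np-hard}: both map hyperedges to items, vertices to properties, set every upper bound $U_{k\ell}=1$, and identify feasible rankings of all $n$ positions with matchings of size $n$ in the $\Delta$-hypergraph. Your additional care about the convention that every position must be filled with a distinct item is exactly the point that makes the equivalence work, and the paper relies on it implicitly as well.
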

\begin{proof}
We present a reduction from $\Delta$-hypergraph matching, which is known to be $\NP$-hard for $\Delta\geq 3$. 
Let $H=(V,E)$ be a $\Delta$-hypergraph and let $n\in \Z$ be a number for which we want to test whether there is a matching of size $n$. 
We construct an instance of the constrained ranking problem whose feasibility is equivalent to $H$ having a matching of cardinality $n$ as follows.
Let $m$ be the number of hyperedges in $H$ indexed by $e_1, e_2, \ldots, e_m$, and let $p$ be the  number of vertices in $H$ indexed by $v_1, v_2, \ldots, v_p$. 
Set the number of items in the ranking problem to be $m$ where the $i$th item corresponds to edge $e_i$, and set the number of positions in the ranking to be $n$. 
For every vertex $v_\ell \in V$, introduce a property as follows:
\[P_\ell=\{i\in [m]: v_\ell \in e_i\}.\]
Thus, there are $p = |V|$ properties. 
Note that as each hyperedge $e_i$ contains at most $\Delta$ vertices, $|T_i| \leq \Delta$ for all items $i\in [m]$ as desired. 
For every $k \in [n]$ and $\ell\in [p]$, let $U_{k\ell}=1$. 

It remains to argue that $H$ has a matching of cardinality $n$ if and only if the instance of the constrained ranking problem is feasible. 
If $H$ has a matching $M$ of cardinality $n$ then let $S:=\{i\in [m]: e_i\in M\}$. 
Define a ranking by taking the elements of $S$ in any order.
Since $M$ is a matching, every vertex $v_\ell \in V$ belongs to at most one hyperedge in $M$. 
Thus, for every property $\ell\in [p]$ we have $|P_\ell \cap S| \leq 1$, and hence all of the fairness constraints are satisfied. 
Similarly, the reasoning in the opposite direction (i.e., that a feasible ranking yields a matching of cardinality $n$) follows by letting $M$ be the set of hyperedges corresponding to the $n$ items that were ranked. 
\end{proof}

\noindent
One possibility is that it is the feasibility of this problem that makes it hard. 
However, this is not the case.
Our next Theorem says that even if we guarantee feasibility via assumption~\eqref{eq:feas1}, the problem remains hard to approximate. 
\begin{theorem}\label{thm:hard_approximate}
A feasible constrained ranking maximization problem (U) that satisfies condition~\eqref{eq:feas1} and has weights that satisfy property \eqref{eq:monge} is $\NP$-hard to approximate within a factor of $O(\nfrac{\Delta}{\log \Delta})$.
\end{theorem}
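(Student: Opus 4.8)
The plan is to reduce from the optimization version of $\Delta$-hypergraph matching, whose approximability is well-understood: it is $\NP$-hard to approximate the maximum $\Delta$-dimensional matching (equivalently, maximum independent set in $\Delta$-uniform hypergraphs / $\Delta$-set packing) within a factor of $O(\nfrac{\Delta}{\log \Delta})$. I would invoke this known inapproximability as a black box and transport it through a weight-preserving, feasibility-preserving reduction into the constrained ranking maximization problem (U). The reduction skeleton is exactly the one from the proof of Theorem~\ref{thm:np-hard}: hyperedges $e_1,\dots,e_m$ become items, vertices $v_1,\dots,v_p$ become properties with $P_\ell=\{i : v_\ell \in e_i\}$, and we set $U_{k\ell}=1$ for every $k,\ell$. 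Since each hyperedge has at most $\Delta$ vertices, the resulting instance has $|T_i|\le \Delta$, so the ranking parameter $\Delta$ matches the hypergraph parameter. A feasible ranking of a set $S$ of items is then exactly a matching $\{e_i : i\in S\}$, so the \emph{number of items one can legally place} corresponds to matching size.

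The first real issue is that Theorem~\ref{thm:hard_approximate} demands the produced instance satisfy the feasibility condition~\eqref{eq:feas1} and carry weights obeying the Monge property~\eqref{eq:monge}, whereas raw hypergraph matching has neither. To enforce~\eqref{eq:feas1} I would pad the instance with a family of ``dummy'' items having the empty type ($T_i=\emptyset$), i.e.\ items belonging to no property. Such items can always be placed without violating any upper bound, so for every $k$ there are at least $n$ items $i$ with $T_i\subseteq S_k$, giving~\eqref{eq:feas1}; moreover they guarantee every instance is feasible (one can always fill the ranking with dummies), which is what lets us speak of approximating the optimum rather than deciding feasibility. The key is to weight the real items (the hyperedges) much more heavily than the dummies so that the optimal value is governed by how many genuine hyperedge-items can be packed, i.e.\ by the maximum matching size, up to a low-order additive term contributed by dummies.

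The second, more delicate, obstacle is reconciling the Monge/monotone structure of $W$ with the fact that we want the objective to essentially count a matching, which is symmetric and position-oblivious. The natural fix is to make the weight of placing an item depend only on whether it is a ``real'' item or a dummy, while respecting monotonicity: I would order items so that all real items precede all dummies, assign each real item a large quality value and each dummy a tiny one, and let $W_{ij}$ decay gently (and identically across the tie-classes) in the position $j$ so that~\eqref{eq:monge} holds with the required non-increasing and Monge inequalities. Because the upper bounds are all $1$ and real items are interchangeable from the constraint's viewpoint (any matching uses distinct properties), the exact positional weights wash out up to a bounded multiplicative factor, so an $\alpha$-approximate ranking yields an $O(\alpha)$-approximate matching and vice versa. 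I would choose the quality gap between real and dummy weights large enough (polynomial in the instance size) that the dummy contribution cannot distort the approximation ratio by more than a constant factor, and then verify that the approximation factor survives the reduction: a hypothetical $o(\nfrac{\Delta}{\log\Delta})$-approximation for constrained ranking would yield an $o(\nfrac{\Delta}{\log\Delta})$-approximation for $\Delta$-hypergraph matching, contradicting the assumed hardness. The main effort, then, is in the bookkeeping of the last step—showing the weight padding and positional decay preserve the approximation ratio up to constants while simultaneously honoring~\eqref{eq:feas1} and~\eqref{eq:monge}—rather than in any conceptually new idea.
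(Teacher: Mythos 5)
Your proposal is correct and follows essentially the same route as the paper: a reduction from the $O(\nfrac{\Delta}{\log\Delta})$-inapproximability of maximum $\Delta$-hypergraph matching, reusing the Theorem~\ref{thm:np-hard} construction, padding with propertyless ``dummy'' items to guarantee feasibility and condition~\eqref{eq:feas1}, and weighting real items above dummies. The only difference is that the paper sidesteps your ``delicate'' second obstacle entirely by taking $W_{ij}=1$ for real items and $W_{ij}=0$ for dummies, independent of position $j$: such weights satisfy~\eqref{eq:monge} with equality, make the ranking value exactly equal to the matching size (so the reduction is approximation-preserving with no constant-factor loss), and render your positional-decay and weight-gap bookkeeping unnecessary.
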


\begin{proof}
In the proof we will use the fact that the maximum $\Delta$-hypergraph matching problem is hard to approximate within a factor of $O(\nfrac{\Delta}{\log \Delta})$~\cite{HSS03}. 
We present an approximation-preserving reduction from the maximum $\Delta$-hypergraph matching problem. 
The proof is similar to that of of Theorem~\ref{thm:np-hard}, however we must now ensure that the resulting instance is feasible and satisfies condition~\eqref{eq:feas1}. 

For a hypergraph $H=(V,E)$ with $|V|=p$ vertices $v_1, v_2, \ldots, v_p$ and $m=|E|$ hyperedges $e_1, e_2, \ldots, e_m$, construct an instance of constrained ranking maximization problem as follows. 
Let $m^\prime:=2m$ be the number of items and let $n:=m$ be the number of positions in the ranking. 
The items $1,2, \ldots, m$ correspond to edges $e_1, e_2, \ldots, e_m$, and the remaining items $m+1, m+2, \ldots, 2m$ we call {\it improper} items. 
As in the proof of Theorem~\ref{thm:np-hard}, for every vertex $v_\ell\in V$ we define a property $P_\ell = \{i\in [m]: v_\ell \in e_i\}$. 
The fairness constraints are defined by upper-bounds $U_{k\ell}=1$ for all $k \in [n]$ and $\ell\in [p]$. 
Improper items do not belong to any property, and hence condition~\eqref{eq:feas1} is satisfied. Furthermore, they can never cause a fairness constraint to be violated. 
Lastly, we let 
\[W_{ij} = \begin{cases}
1 & ~~\mbox{if $i\in [m]$},\\
0 &~~\mbox{if $i\in [m+1,2m].$}
\end{cases}\]
Note that such a matrix satisfies the property \eqref{eq:monge}. 

Every matching $M \subseteq E$ of cardinality $k$ in $H$ corresponds to a feasible ranking of total weight $k$ in our instance of ranking maximization (simply take the $k$ items corresponding to edges in the matching and add any $(m-k)$ improper items). 
Similarly, every ranking of weight $k$ can be transformed into a matching of size $k$. 
As the reduction is approximation preserving, we conclude that the constrained ranking maximization problem is $\NP$-hard to approximate within a factor better than $O(\nfrac{\Delta}{\log \Delta})$.
\end{proof}

\noindent
Our next result says that the constrained ranking feasibility is hard even when we fix the structure of the set of properties; i.e., the only input to the problem are the upper-bound constraints. 
\begin{theorem}\label{thm:fixedhardness}
There exists a fixed family of properties $\mathcal{P}_m \subseteq 2^{[m]}$ for $m\in \N$ and  $p_m:=|\mathcal{P}_m| =O(m^2)$ such that the problem: ``given $n,m\in \N$ and a matrix of upper-bound values $U\in \N^{n \times p_m}$, check whether the constrained ranking instance defined by $\mathcal{P}_m$ and $U$ is feasible'', is $\NP$-hard.
\end{theorem}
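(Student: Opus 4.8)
The plan is to reduce from the \textbf{independent set} decision problem, which is $\NP$-complete: given a graph $G=(V,E)$ and an integer $n$, decide whether $G$ contains an independent set of size $n$. The crux of the argument — and the only point that genuinely requires care — is that the property family $\mathcal{P}_m$ must be chosen \emph{before} seeing $G$, since the only part of the instance the reduction is allowed to vary is the upper-bound matrix $U$. The key idea is therefore to take $\mathcal{P}_m$ to consist of \emph{all} pairs of items, and to encode the edge set of $G$ entirely inside $U$ by distinguishing edges from non-edges through their upper bounds.

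Concretely, I would fix
\[ \mathcal{P}_m := \bigl\{ \{i_1,i_2\} : 1 \le i_1 < i_2 \le m \bigr\}, \]
so that $\mathcal{P}_m \subseteq 2^{[m]}$ depends only on $m$ and $p_m = |\mathcal{P}_m| = \binom{m}{2} = O(m^2)$, as required. Given an instance $(G,n)$ with $|V| = m$, I would identify the $m$ items with the vertices of $G$, use $n$ ranking positions, and for every property $\ell = \{i_1,i_2\} \in \mathcal{P}_m$ and every $k \in [n]$ set
\[ U_{k\ell} = \begin{cases} 1 & \text{if } \{i_1,i_2\} \in E, \\ 2 & \text{if } \{i_1,i_2\} \notin E. \end{cases} \]
Since each property has exactly two members, the bound $2$ imposes no constraint, whereas the bound $1$ forces at most one of $i_1,i_2$ to be ranked. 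This matrix is computable in time polynomial in $m$ and $n$, and only $U$ depends on $G$; the property structure $\mathcal{P}_m$ is fixed.

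For correctness, I would observe that a feasible ranking is simply a choice of $n$ distinct items to occupy the $n$ positions — the weights and the ordering are irrelevant to feasibility, and because each nested family of bounds is constant in $k$, the binding constraint for each property occurs at $k=n$ and controls only the total count of ranked items with that property. Writing $S \subseteq [m]$ for the set of ranked items, the constraints hold if and only if every edge $\{i_1,i_2\} \in E$ has at most one endpoint in $S$, i.e. if and only if $S$ is an independent set of size $n$. Hence the constructed instance is feasible exactly when $G$ has an independent set of size $n$, and since any subset of an independent set is independent, this is equivalent to $G$ having an independent set of size at least $n$.

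I do not anticipate a serious technical obstacle: the reduction is elementary and the equivalence is immediate once the encoding is in place. The single subtle design choice is to use all $\binom{m}{2}$ pairs as the fixed property family rather than the edges of $G$; this is precisely what allows the upper-bound matrix $U$ alone to carry all information about $G$ while keeping $\mathcal{P}_m$ independent of the input graph and of size $O(m^2)$, thereby establishing the claimed $\NP$-hardness.
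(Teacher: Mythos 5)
Your proposal is correct and follows essentially the same route as the paper: both fix $\mathcal{P}_m$ to be all $\binom{m}{2}$ pairs of items, reduce from independent set, and encode the edges of $G$ purely through the upper bounds (bound $1$ on edges, effectively no constraint on non-edges), so that feasible rankings correspond exactly to independent sets of size $n$. The only cosmetic difference is that you impose the bound $1$ at every position $k$ while the paper imposes it only at $k=n$; since the prefix-count constraints are nested, these are equivalent.
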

\begin{proof}
Let the family of properties $\mathcal{P}_m$ be all 2-element subsets of $[m]$, i.e., 
\[\mathcal{P}_m =\{ \{i_1, i_2\}: i_1, i_2\in [m], i_1 \neq i_2\}.\]
We reduce the independent set problem to the above problem of constrained ranking maximization with a fixed property set. 

Consider any instance of the independent set problem, i.e., a graph $G=(V,E)$ and $t\in \N$ (the underlying question is: is there an independent set of size $n$ in $G$?).  
Construct a corresponding instance of constrained ranking maximization as follows: 
The number of ranking positions is $n$, and there are $m = |V|$ items to rank, one for every vertex $v\in V$.
Thus, there is a property for each pair of vertices. 
For every edge $e=\{i_1, i_2\} \in E$, set a fairness constraint that restricts the number of items having property $e$ in the top-$n$ positions to be at most $U_{n,e}=1$. 
In other words, at most one item out of any $\{i_1, i_2\} \in E$ can appear the ranking. 
For all non-edges, we leave the upper-bound constraint on the corresponding property unspecified, or equivalently, we let $U_{k,e}=k$ for all $e=\{i_1, i_2\}\notin E$. 

Clearly feasible rankings are in one-to-one correspondence with independent sets of size $n$ in $G$. 
Indeed, if we place items (vertices) from an independent set of size $n$ in the ranking in any order, then the resulting ranking is feasible, as the only essential constraints which are imposed on the ranking are that for every edge $e=\{i_1, i_2\} \in E$ at most one of $i_1, i_2$ is present in the ranking. 
Conversely, if these constraints are satisfied, the set of $n$ items placed in the ranking forms an independent set of size $n$. 
As the independent set problem is $\NP$-hard, via the above reduction we obtain hardness of the constrained ranking maximization problem with fixed properties.
\end{proof}


\noindent Lastly, one could hope that it is still possible solve the constrained ranking maximization problem by allowing constraints to be violated by a small amount.
However, this also remains hard.
The result states that finding a solution which violates all the constraints at most a factor of $c$ (for any constant $c$) is $\NP$-hard. 
Interestingly our proof relies on a certain bound from Ramsey theory. 
\begin{theorem}\label{thm:violation}
For every constant $c>0$, the following  violation gap variant of the constrained ranking feasibility problem (U) is $\NP$-hard.
\begin{enumerate}
\item Output YES if the input instance is satisfiable.
\item Output NO if there is no solution which violates every upper-bound constraint at most $c$ times.  
\end{enumerate}
\end{theorem}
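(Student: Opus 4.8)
The plan is to give a gap-preserving reduction from the maximum independent set problem, using its inapproximability due to H\aa stad~\cite{Hastad96}: for every $\epsilon>0$ it is $\NP$-hard to distinguish graphs $G=(V,E)$ with independence number $\alpha(G)\geq |V|^{1-\epsilon}$ from those with $\alpha(G)\leq |V|^\epsilon$, where $\alpha$ denotes the size of a largest independent set. Following the idea of~\cite{CK99}, I would encode cliques of $G$ as properties so that tolerating a multiplicative factor-$c$ violation forces the chosen items to induce a subgraph without large cliques, whereupon a Ramsey-type bound extracts a large independent set. The whole point of the clique gadget is to circumvent the difficulty noted in the overview, namely that a nearly-infeasible instance can usually be satisfied while spreading small violations over many constraints.

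Concretely, given $G=(V,E)$ I would build a constrained ranking (U) instance with $m=|V|$ items (one per vertex) and $n:=\lceil |V|^{1-\epsilon}\rceil$ positions, for a small constant $\epsilon>0$ fixed below. For every $(c+1)$-clique $K\subseteq V$ introduce a property $P_K=K$ whose only binding upper bound is $U_{n,K}=1$ (set $U_{k,K}=k$ for $k<n$, so no earlier constraint bites); since $c$ is constant there are at most $\binom{|V|}{c+1}=O(|V|^{c+1})$ such properties, so the construction is polynomial-time. A ranking simply fills all $n$ positions, and since every effective constraint lives at the top-$n$ level, only the selected set $S$ (with $|S|=n$) matters. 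In the completeness case $\alpha(G)\geq|V|^{1-\epsilon}\geq n$: placing an independent set of size $n$ in any order gives $|S\cap K|\leq 1=U_{n,K}$ for every clique $K$ (an independent set meets any clique in at most one vertex), so the instance is exactly satisfiable and the algorithm must answer YES.

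The crux is soundness, where $\alpha(G)\leq|V|^\epsilon$ and I must show that \emph{every} ranking violates some constraint by a factor exceeding $c$. Suppose not; then some $S$ with $|S|=n$ satisfies $|S\cap K|\leq c\cdot U_{n,K}=c$ for every $(c+1)$-clique $K$, which means $G[S]$ contains no $(c+1)$-clique. Invoking the Ramsey bound $R(c+1,t)\leq\binom{c+t-1}{c}=O(t^c)$, any graph on $N$ vertices with no $(c+1)$-clique has an independent set of size $\Omega(N^{1/c})$. Applied to $G[S]$ this yields an independent set of $G$ of size $\Omega(n^{1/c})=\Omega(|V|^{(1-\epsilon)/c})$, contradicting $\alpha(G)\leq|V|^\epsilon$ as long as $(1-\epsilon)/c>\epsilon$, i.e. $\epsilon<\tfrac1{c+1}$. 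Choosing $\epsilon=\tfrac1{2(c+1)}$ separates the two regimes for all large $|V|$, forcing the answer NO. Note the matching of parameters: the factor-$c$ slack rules out exactly $(c+1)$-cliques, giving an independent set of size about $n^{1/c}$, which is precisely the $|V|^{1-1/c}$ inapproximability alluded to in the overview.

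The main obstacle is the quantitative bookkeeping that makes the gap actually close: the tolerance only excludes $(c+1)$-cliques, so the extracted independent set has size roughly $n^{1/c}$, and one must choose $n$ (equivalently $\epsilon$) small enough relative to H\aa stad's soundness bound $|V|^\epsilon$ that $|V|^{(1-\epsilon)/c}$ strictly dominates it; the inequality $\epsilon<1/(c+1)$ is exactly what guarantees this. Once $\epsilon$ is fixed the remaining steps are routine, and since the construction uses only upper-bound constraints and no weights, the hardness holds for the feasibility (U) problem as stated.
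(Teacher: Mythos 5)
Your proposal is correct and follows essentially the same route as the paper's proof: both encode the $(c+1)$-cliques of $G$ as properties with upper bound $U_{n,K}=1$, observe that a factor-$c$ violation forces the selected $n$ vertices to induce a $K_{c+1}$-free subgraph, apply the Ramsey bound to extract an independent set of size polynomial in $n$, and conclude via the H{\aa}stad--Zuckerman inapproximability of independent set. The only differences are presentational: you perform a one-shot gap reduction with $n=\lceil |V|^{1-\epsilon}\rceil$ and an explicit choice $\epsilon = \frac{1}{2(c+1)}$ (and use the slightly sharper exponent $n^{1/c}$ rather than the paper's $n^{1/(c+1)}$), whereas the paper reduces from the independent set decision problem and argues that a gap oracle would yield a $|V|^{1-1/(c+1)}$-approximation algorithm; the underlying construction and Ramsey argument are identical.
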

\begin{proof}
The reduction is inspired by an idea developed for the approximation hardness of packing integer programs by~\cite{CK99}. 
We use the inapproximabality of independent set (\cite{Hastad96,Zuckerman07}) to prove the theorem.  
It states that approximating the cardinality of the maximum independent set in an undirected graph $G=(V,E)$ to within a factor of $|V|^{1-\eps}$ is $\NP$-hard, for every constant $\eps>0$. 

Fix a constant $c>0$, and without loss of generality, assume that $c \in \N$. 
Consider the following reduction:  
given an instance of the independent set problem, i.e., a graph $G=(V,E)$ and a number $n \in \N$ the goal is to check whether there is an independent set of size $n$ in $G$. 
Let the set of items be $V$ and the number of positions in the ranking to be $n$. 
For every clique $C \subseteq V$ of cardinality $(c+1)$ in $G$, add a new property $C$ and set an upper-bound on the number of elements in $C$ in the top-$n$ positions of the ranking: $U_{n,C}=1$.
Note that these are at most ${m \choose c+1} \leq m^{c+1}=\poly(m)$ constraints. 

We claim the following:
\begin{enumerate}
\item If there is a ranking that violates all of the constraints at most a factor of $c$, then there is an independent set of cardinality $\Omega\inparen{n^{1/(c+1)}}$ in $G$.
\item If there is no feasible ranking then there is no independent set of cardinality $n$ in $G$.
\end{enumerate}
Note that proving the above suffices to obtain the desired result: 
If there was a procedure to solve the constrained ranking feasibility problem for $c$, then one could  approximate in polynomial time the cardinality of the maximum independent set in a graph with an approximation ratio of $|V|^{1-\nfrac{1}{(c+1)}}$, which is not possible unless $P=NP$~\cite{Hastad96,Zuckerman07}.
Hence it remains to establish the claim. 

To establish the second claim, note that every independent set $S \subseteq V$ of size $n$ gives a feasible solution to the constrained ranking instance by placing the items corresponding to $S$ in the ranking in any order. 
To establish the first claim, suppose that there is a solution that violates every constraint by at most a factor of $c$. 
This means that we have a subset $S$ containing $n$ vertices in $G$ that does not contain any $(c+1)$-clique. 
Using a standard upper-bound on the Ramsey number $R \inparen{n^{\frac{1}{c+1}}, c}$ it follows that in the subgraph induced by $n$ in $G$ there exists an independent set of cardinality $\Omega\inparen{n^{1/(c+1)}}$.
\end{proof}

\section{Discussion and Future Work}\label{sec:discussion}

In this paper, motivated by controlling and alleviating algorithmic bias in information retrieval, we initiate the study of  the complexity of a natural constrained optimization problem concerning rankings.
Our results indicate that the constrained ranking maximization problem, which is a generalization of the classic bipartite matching problem, shows fine-grained complexity. 
Both the structure of the constraints and the numbers appearing in upper-bounds play a role in determining its complexity.  Moreover, this problem generalizes several hypergraph matching/packing problems.
Our algorithmic results bypass the obstacles implicit in the past theory work by leveraging on the structural properties of the constraints and common objective functions from information retrieval.
More generally, our results not only contribute to the growing set of algorithms to counter algorithmic bias for fundamental problems \cite{Dwork2012,BS2015,Gummadi2015,FatML,CKSDKV17,celis2017complexity,kirkpatrick2016battling,CV17}, the structural insights obtained may find use in other algorithmic  settings related to the rather broad scope of ranking problems.

Our work also suggests some open problems and directions.
The first question concerns the $\Delta=1$ case and its (LU) variant; Theorem~\ref{thm:flow} implies that it can be solved in $\widetilde{O}(n^2m)$ time, can this be improved to nearly-linear time, as we do for the (U) variant (Theorem~\ref{thm:single})?
Another question is the complexity of the constrained ranking maximization problem (in all different variants) when $\Delta=2$ -- is it in $\mathbf{P}$?
The various constants appearing in our approximation algorithms  are unlikely to be optimal and improving them remains important.
{
In particular, our approximation algorithm for the case of large $\Delta$ in Theorem~\ref{thm:mult} may incur a $2$-multiplicative violation of constraints.
This could be significant when dealing with instances where the upper bound constraints are rather large (i.e., $U_{k\ell}\gtrsim \frac{k}{2}$) in which case, such a violation effectively erases all the constraints.
It is an interesting open problem to understand whether this $2$-violation can be avoided, either by providing a different algorithm or by making different assumptions on the instance.
}

{In this work we consider linear objective functions for the the ranking optimization problem, i.e., the objective is an independent sum of profits for individual item placements.
While this model might be appropriate in certain settings, there may be cases where one would prefer to measure the quality of a ranking as a whole, and in particular the utility of placing a given item at the $k$th position should also depend on what items were placed above it~\cite{ZCL2003}.
Thus defining and studying a suitable variant of the problem for a class of objectives on rankings that satisfy some version of the diminishing returns principle (submodularity), is of practical interest.
}

A related question that deserves independent exploration is to study the complexity of sampling a constrained ranking from the probability distribution induced by the objective (rather than outputting the ranking that maximizes its value, output a ranking with probability proportional to its value).
Finally, extending our results to the online setting seems like an important technical challenge which is also likely to have practical consequences.

\vspace{1mm}
\paragraph{Acknowledgments.} We would like to thank Mohit Singh and Rico Zenklusen for valuable discussions on the problem.

\bibliographystyle{alpha}
\bibliography{ranking}

\appendix

\section{Values $W_{ij}$ and Common Ranking Metrics}\label{sec:metrics}

In information retrieval, when proposing a new ranking method, it is naturally important to evaluate the quality of the resulting output. 
Towards this, a variety of ranking metrics have been developed. 
Ideally, one would use such metrics to compare how ``good'' a constrained ranking is (with respect to quality) in comparison to an unconstrained ranking. 
Putting it another way, we would like to maximize the quality of the constrained ranking as measured by such metrics. 
In this section we briefly survey three important classes of ranking metrics, translated to our setting, and show that our problem formulation can capture metrics by defining the values $W_{ij}$ appropriately.

Such metrics are often being defined with respect to the \emph{item quality} (often referred to as \emph{relevance} in the IR literature). 
Without loss of generality, we relabel the items so that $a_1 \geq a_2 \geq \cdots \geq a_m$ denote the qualities for $i \in \{1, \ldots, m\}$.
{For ease of presentation, we introduce simple forms of these metrics below; often in practice they are normalized for better comparison across rankings -- the two are equivalent with respect to optimization.}
We consider integral rankings $x \in \overline\Omega_{m,n}$ where $x\in \{0,1\}^{m \times n}$ if and only if for every $i\in [m]$ and every $j\in [n]$
\begin{equation}
\sum_{j=1}^n x_{ij} \leq 1 ~~~~\mbox{and}~~~~ \sum_{i=1}^m x_{ij}=1.
\end{equation}

\subsection{Rank-1 Metrics:} The quality of the ranking is considered to be the sum of the quality of its items where the quality of the item at position $j$ is \emph{discounted} by a value $f(j)$ that is non-increasing in $j$; in other words, getting the order correct at the top of the list is more important than getting it correct at the bottom. 
Formally, given a ranking $x \in \overline\Omega_{m,n}$, a rank-1 metric is defined as  
 \[m_{\mathrm{R1}}(x) := \sum_{j=1}^n \sum_{i : x_{ij} = 1} a_{i} \cdot f(j)\] 
 for a non-increasing function $f: [n] \to \R_{\geq0}$. 
Common examples are Discounted Cumulative Gain (DCG) \cite{jarvelin2002cumulated} where $f(i) = \frac 1 {\log(i+1)}$ for all $i$, and its variants.
For rank-1 metrics, we can simply define $W_{ij} := a_i \cdot f(j)$.
Let us now show that property \eqref{eq:monge} holds for such a metric.
Select any two $1\leq i_1 < i_2 \leq m $ and $1\leq j_1 < j_2 \leq n$. 
By definition $a_{i_1} \geq a_{i_2}$ and $f(j_1) \geq f(j_2)$, and hence $W_{i_1j_1} \geq W_{i_2j_1}$ and $W_{i_1j_1} \geq W_{i_1j_2}$.
Further, this implies that
\begin{align*}&(W_{i_1j_1} + W_{i_2j_2}) - (W_{i_1j_2} + W_{i_2j_1})\\
= &(a_{i_1}f(j_1) + a_{i_2}f(j_2)) -  (a_{i_1}f(j_2) + a_{i_2}f(j_1))\\
= &(a_{i_1} - a_{i_2})(f(j_1) - f(j_2))
\geq 0,
\end{align*}
so the property holds.

\subsection{Bradley-Terry Metrics:} 
There are a variety of multiplicative ranking metrics, originally developed with the goal of sampling from a distribution of (complete) rankings $x \in \overline\Omega_{m,m}$. 
For example, in the Bradley-Terry model \cite{bradley1952rank}, the quantity $\frac {a_{i_1}}{a_{i_1} + a_{i_2}}$ denotes the probability that item $i_1$ should be ranked above item $i_2$. 
This gives rise to the metric $m_{\mathrm{BT}}(x) := \prod\limits_{i_1,i_2: x_{i_1j_1} = x_{i_2j_2} = 1, j_1 < j_2} \frac {a_{i_1}}{a_{i_1} + a_{i_2}}$.
Removing the denominator (which is the same for all $x$ as every pair of items appears exactly once), it is easy to see that the metric can be rewritten as 
\[m_{\mathrm{BT}}(x) = \prod_{i : x_{ij} = 1} a_i^{m-j}.\]
Hence, it is equivalent to maximize $\log(m_{\mathrm{BT}}(x)) = \sum_{j=1}^n \sum_{i : x_{ij} = 1} (m-j) \log(a_i)$. 
By letting $a_i^\prime := \log(a_i)$ and $f(j) := (m-j)$, this is a rank-1 metric with non-increasing $f$ and $a_i^\prime$, and hence fits our formulation as discussed above. 

\subsection{Alignment Metrics:} 
One can also consider metrics that depend only on the difference in the \emph{item positions} between two rankings. 
Let $x^\star \in \overline\Omega_{m,m}$ be the optimal unconstrained ranking (this corresponds to sorting $i$s in non-increasing order of $a_i$). 
Alignment metrics are defined with respect to the positions of the items in $x^\star$, and we give some examples below. 

Given a ranking $x \in \overline\Omega_{m,n}$, in our setting 
Spearman's footrule  \cite{spearman1906footrule} corresponds to 
\[m_f(x) := \sum_{j=1}^n \sum\limits_{i: x_{ij} = x^\star_{ij^\star}=1}\left((2m-i-j) - |j - j^\star| \right),  \]
and similarly Spearman's rho \cite{spearman1904proof} corresponds to 
\[m_\rho(x) := \sum_{j=1}^n \sum\limits_{i: x_{ij} = x^\star_{ij^\star}=1} \left((2m-i-j)^2 - (j - j^\star)^2 \right).\]
These vary slightly from the usual definitions as it is for a \emph{partial} rather than complete ranking, and we measure \emph{similarity} rather than difference. The former is in line with partial ranking versions of Spearman's footrule and Spearman's rho as in \cite{fagin2003comparing,fagin2006comparing}, and for the latter we add a value to maintain non-negativity and monotonicity.

The above metrics can be naturally captured in our formulation by letting $W_{ij} := (2m-i-j) - |j - j^\star|$ or $W_{ij} := (2m-i-j)^2 - (j - j^\star)^2$ respectively where $j^\star$ is such that $x^\star_{ij^\star} = 1$.
Here, for any two $1\leq i_1 < i_2 \leq m $ and $1\leq j_1 < j_2 \leq n$ it is clear that $W_{i_1j_1} \geq W_{i_2j_1}$ and $W_{i_1j_1} \geq W_{i_1j_2}$ as desired.
Furthermore, it is not difficult to show via a simple but tedious case analysis of the possible orderings of $j_1, j_2, j^\star_1, j^\star_2$ that the desired property $(W_{i_1j_1} + W_{i_2j_2}) - (W_{i_1j_2} + W_{i_2j_1}) \geq 0$ holds.
One can also think of weighted versions of these metrics, as introduced in \cite{kumar2010generalized}, for which these observations generalize.

\end{document}